\newtheorem{theorem}{Theorem}[section]
\newtheorem{lemma}[theorem]{Lemma}
\newtheorem{remark}{Remark}[section]
\providecommand{\customgenericname}{}
\newcommand{\newcustomtheorem}[2]{%
  \newenvironment{#1}[1]
  {%
   \renewcommand\customgenericname{\bf #2}%
   \renewcommand\theinnercustomgeneric{ ##1}%
   \innercustomgeneric
  }
  {\endinnercustomgeneric}
}
\title{The effect of host population heterogeneity on epidemic outbreaks}
\author[1,2,*]{M. C. J. Bootsma}
\affil[1]{Department of Mathematics, Faculty of Science, Utrecht University, the Netherlands}
\affil[2]{Julius Center for Health Sciences and Primary Care, University Medical Center Utrecht, Utrecht University, the Netherlands}
\author[3,4]{K. M. D. Chan}
\affil[3]{Korteweg-de Vries Institute, University of Amsterdam, the Netherlands}
\affil[4]{Transtrend BV, Rotterdam, The Netherlands}
\author[1]{O. Diekmann}
\author[5]{H. Inaba}
\affil[5]{Faculty of Education, Tokyo Gakugei University, Koganei-shi, Tokyo, Japan}
\affil[*]{Corresponding author: M.C.J.Bootsma@uu.nl}
\begin{document}

\maketitle
\begin{abstract}
In the first part of this paper, we review old and new results about the influence of host population heterogeneity on (various characteristics of) epidemic outbreaks. In the second part we highlight a modelling issue that so far has received little attention: how do contact patterns, and hence transmission opportunities, depend on the size and the composition of the host population? Without any claim on completeness, we offer a range of potential (quasi-mechanistic) submodels. The overall aim of the paper is to describe the state-of-the-art and to catalyse new work.
\end{abstract}
{\bf Keywords:} Kermack-McKendrick, epidemic outbreak model, static heterogeneity, final size, contact structures, herd immunity threshold

\section{Introduction}

As the title indicates, our aim is to investigate how heterogeneity influences various aspects of an epidemic outbreak in a demographically closed host population. In particular, we focus on the Basic Reproduction Number $R_0$, the Malthusian parameter $r$, the final size, the timing and the size of the peak in incidence.

The paper consists of two parts. In the first part we adopt a top down approach. We first introduce a rather general model. Since its formulation employs measures over the trait space, it covers discrete and continuous traits in one go. We present various results that, essentially, have been known for a long time, but that might not exactly be ‘well known’. Next we consider various simplifications and their underlying interpretation and motivation. These lead us to recent results, for instance on the HIT (herd immunity threshold; see Section \ref{section_sep2}), that were triggered by the outbreak of Covid-19.

An important, yet sometimes rather implicit, ingredient of epidemic models is a specification of the rate at which an individual, with a certain trait, makes contact with other individuals having a specified trait. With this in mind, Mossong et al. \cite{Mossong2008} studied age-structured contact patterns empirically, based on a population survey in various European countries. In the context of theoretical ‘what if’ studies, it is crucial to extrapolate such quantitative information. A concrete example is the final size of an outbreak of a new influenza type in Hong Kong \cite{Chan2016} and the question of what to expect in the future when, due to ageing, the age-distribution has changed considerably. Another example is motivated by Corona ticket measures: How does the contact process at a venue change when unvaccinated individuals are banned and the vaccination coverage is age-dependent? The aim of the second part of this paper (Sections \ref{section_IPS} and \ref{section_exampleIPS}) is to initiate theoretical work on contact patterns by presenting various motivated options of how contact intensities might depend on the size and composition of the population and how this affects the final size of the outbreak.

For a general introduction to structured epidemic models, see Chapters 7, 8 and 9 of \cite{Diekmann2013} and see \cite{Inaba2017, Li2020}.


\section{Model formulation}

The host population we consider is assumed to be static with respect to demography, so neither birth nor death of host individuals is taken into account. Host individuals are characterized by a trait, denoted by symbols like $x$ and $\xi$ and sometimes called ‘type’, rather than trait, in particular in situations when there exist only finitely many types. The trait is static, i.e., does not change during the life of the host individual, but see Appendix \ref{appendix_dynamic}. The trait $x$ takes values in a set $\Omega$, which is a measurable space, i.e., $\Omega$ is equipped with a $\sigma$-algebra. For concreteness, we let $\Omega$ be a subset of $\mathbb R^n$ with the Borel $\sigma$-algebra. The distribution of $x$ within the host population is described by the (given/known) measure $\Phi$, in order to unify the treatment of a ‘continuum’ setting, with a trait distribution described by a density, and the ‘discrete’ setting with finitely many types. The host population size is denoted by $N$. So $\Phi(\Omega)=1$ and the number of individuals with trait belonging to the measurable subset $\omega$ of $\Omega$ equals $N \Phi(\omega)$. Apart from $N$ and $\Phi$ we need just one modelling ingredient: 
\begin{equation}
    \begin{aligned}
         A(\tau,x,\xi) =& {\rm ~the ~expected ~contribution ~to ~the ~force ~of ~infection ~on ~an ~individual ~with ~trait ~{\it x}}\cr 
         &{\rm of ~an ~individual ~with ~trait ~\xi ~that ~became ~infected ~\tau ~units ~of ~time ~ago}
     \end{aligned}
     \label{2.1}
\end{equation}

Here $A$ is a measurable non-negative function mapping $\mathbb R_+ \times \Omega \times \Omega$ into $\mathbb R_+$ and $A$ is integrable with respect to $(\tau,\xi)$ over $\mathbb R_+ \times \Omega$. In due time we will make the ‘separable mixing’ assumption that $A$ factorizes as the product of a function of $x$ and a function of $(\tau,\xi)$, but first we shall derive some general results.

Let $S(t,\omega)$ denote the number of susceptible individuals at time $t$ with trait in $\omega$. The model assumes that the measure $S(t,.)$ is absolutely continuous with respect to $\Phi$ with bounded ``derivative''. So for each $t$ a bounded measurable function $s(t,.)$, with $0 \le s(t,x) \le 1$, exists such that

\begin{equation}
    S(t, \omega) = N \int_\omega s(t,x) \Phi(dx).
\end{equation}

Thus $s(t,x)$ can be interpreted as the probability that an individual with trait $x$ is susceptible at time $t$. Let $\Lambda(t,x)$ denote the force of infection at time t on individuals with trait $x$, i.e., assume that

\begin{equation}\label{2.3}
    \partial_t s(t,x) = - \Lambda(t,x) s(t,x).
\end{equation}
    
In doubtful notation we say that the incidence at time $t$ among individuals of trait $\xi$ equals 
\[ - N \partial_t s(t, \xi) \Phi(d \xi).\]

The point is that, directly from the interpretation (2.1) , we have

\begin{equation}\label{2.4}
    \Lambda(t,x) = \int_{0}^{\infty} \int_\Omega  A(\tau,x, \xi) \{- N \partial_t s(t-\tau,\xi) \Phi(d \xi)\} d\tau
\end{equation}

As we show next, we can combine (2.3) and (2.4) to derive a nonlinear RE (Renewal Equation; see \cite{Diekmann2008}) for $s$. To do so, we first integrate (2.3) while assuming that $s(-\infty,x)=1$ for all $x$ (to reflect that a very long time ago, so before the infectious agent made its appearance, the entire population was susceptible):

\begin{equation}
       s(t,x) = \exp \left(-\int_{-\infty}^{t} \Lambda(\sigma,x) d\sigma \right).
\end{equation}

Next we integrate (2.4) with respect to time from $-\infty$ to $t$. Interchanging the order of the integrals, and using that $s$ is a primitive of $\partial_t s$, we arrive at

\begin{equation}
     \int_{-\infty}^{t} \Lambda(\sigma,x) d\sigma  =  N \int_{0}^{\infty} \int_{\Omega}  A(\tau,x,\xi) [ 1 - s(t-\tau,\xi)] \Phi(d\xi) d\tau.
\end{equation}

Upon substitution of (2.6) into (2.5) we obtain the equation

\begin{equation}\label{main_equation}
      s(t,x) = \exp\left( - N \int_{0}^{\infty} \int_\Omega  A(\tau,x,\xi) [ 1 - s(t-\tau,\xi)] \Phi(d\xi) d\tau\right)
\end{equation}
which provides a complete mathematical representation of the model and serves as the starting point of our analysis in the next section.
Here we do not discuss the initial value problem, corresponding to prescribing the history of $s$ on a time interval extending back to $-\infty$, nor the dynamical systems point of view, corresponding to shifting in time along the function $s$ obtained by extending the given history. Instead, we refer to \cite{Diekmann2008}, \cite{Diekmann2007} and \cite{Diekmann2011} for an exposition of the relevant ideas. Also see \cite{Thieme1985}.

In conclusion of this section, we mention that in the recent paper \cite{Diekmann2021}, it is argued that the discrete time variant
\begin{equation} \label{main_equation_discrete}
     s(t,x) = \exp\left( - N \sum_{j=1}^{\infty} \int_\Omega  A_j(x,\xi) [ 1 - s(t-j,\xi)] \Phi(d\xi)\right)
\end{equation}
offers great computational advantages, in particular when $\Omega$ is a finite set. The key point is that there is no user-friendly tool to solve (\ref{main_equation}) and that (\ref{main_equation_discrete}) is straightforward to implement. For the numerical analysis point of view, we refer to Messina et al. \cite{Messina2022a,Messina2022b, Messina2023_a,Messina2023,Messina2023b}.

\section{The linearized problem}
For given $t$, we think of $s(t,\cdot)$ as an element of 
\begin{equation}\label{set_Y}
	\mathcal Y = \{ \psi : \psi~{\rm~ is~ a~ bounded~ measurable~ function~ } \Omega \to \mathbb R \}
\end{equation}
provided with the norm
\begin{equation}
    \| \psi \| = \sup_{x \in \Omega} |\psi(x)|.
\end{equation}
Equation (\ref{main_equation}) admits the disease free steady state solution $s=1$ identically. Inserting
\begin{equation}
   s(t,x) := 1 - y(t,x) 
\end{equation}
into (\ref{main_equation}) and assuming that $y$ is small, we obtain, upon neglecting the higher order terms in the Taylor expansion, the linearized equation
\begin{equation}
    y(t,x) = N \int_{0}^{\infty} \int_{\Omega}  A(\tau,x,\xi) y(t-\tau,\xi) \Phi(d\xi) d\tau.
\end{equation}

We refer to Sections 5 and 6 of \cite{Thieme1984} for an early profound analysis of such linear equations within the setting of positive operator theory.

When, as an Ansatz, we put
\begin{equation}
  y(t,x) = \exp(\lambda t) \psi(x)
\end{equation}
we obtain the nonlinear eigenvalue problem
\begin{equation}
  \psi = \mathcal{K}_\lambda \psi
\end{equation}
with, for $\lambda$ in a right half plane of $\mathbb{C}$,
\begin{equation} \label{K_lambda}
    (\mathcal{K}_\lambda \psi)(x) := N \int_{\Omega} k_\lambda(x,\xi)\psi(\xi)\Phi(d\xi)
\end{equation}
\begin{equation}
    k_\lambda (x,\xi) := \int^{\infty}_0 A(\tau,x,\xi) e^{-\lambda \tau} d\tau.
\end{equation}
We assume that 
\begin{equation}\label{k0_finite}
    \sup_{(x,\xi) \in \Omega \times \Omega} k_0(x,\xi) < \infty
\end{equation}
and interpret $\mathcal K_0: \mathcal Y \to \mathcal Y$ as the Next Generation Operator (NGO), but this needs a bit of explanation.

The standard approach, as presented in \cite{Diekmann2013} and \cite{Inaba2017}, is to define the NGO as a bounded positive operator on $\mathcal L_1(\Omega)$. The generality achieved by describing the population composition by the measure $\Phi$, precludes this in the present situation. We have to work with measures on $\Omega$ that are not necessarily absolutely continuous with respect to the Lebesgue measure. So here, guided by the interpretation just as in the standard approach, we introduce $\mathcal K: M(\Omega) \to M(\Omega)$ by defining 
\begin{equation}
    (\mathcal K m)(\omega) = N \int_{\omega} \Big( \int_\Omega k_0 (x,\xi) m(d\xi) \Big) \Phi(dx).
\end{equation}

Now note that $\mathcal K$ maps a measure $m$ to a measure of the special form \[\omega \mapsto N \int_\omega \psi(x) \Phi(dx)\] for some $\psi \in \mathcal Y$. The interpretation is that $\psi$ specifies the distribution over $\Omega$ in the form of a fraction. This observation motivates us to define a bounded linear operator $T: \mathcal Y \to M(\Omega)$ by
\begin{equation}\label{T_operator}
    (T \psi)(\omega) = N \int_\omega \psi(x) \Phi(dx).
\end{equation}
We assume that $T$ is \underline{injective} or, in other words, that 
\begin{equation*}
    \int_\omega \psi(x) \Phi(dx) = 0,~ \forall \omega ~\Rightarrow ~\psi(x) = 0, ~\forall x
\end{equation*}
(so, for instance, if $\Omega$ is a finite set, then $\Phi$ should be positive for all points). In this case we have
\begin{equation}\label{K0_fractions}
    \mathcal K_0 = T^{-1} \mathcal K T
\end{equation}
and the conclusion is that $\mathcal K_0$ is indeed the NGO, but represented in terms of \underline{fractions}. We define the Basic Reproduction Number $R_0$ as the spectral radius of $\mathcal K_0$.

In \cite{Andreasen2011}, V. Andreasen presents more or less the same observation for the special case that $\Omega$ is a finite set.


We define the Malthusian parameter $r$ as the unique real root of 
\begin{equation}
    \text{spectral radius}~\mathcal K_\lambda = 1
\end{equation}
if a real root exists. It is of interest to determine precise conditions on $A$ and $\Phi$ such that
\begin{enumerate}
\item [—] $R_0$ is an eigenvalue with a corresponding positive eigenvector
\item [—] $r$ exists
\item [—] $\mathcal K_r$ has eigenvalue 1 with corresponding positive eigenvector
\item [—] ${\rm sign}(R_0-1) = {\rm sign}( r)$
\end{enumerate}
We refer to \cite{Inaba2017} for results in the $\mathcal L_1(\Omega)$ setting and to \cite{Franco2022} for results in the $M(\Omega)$ setting. When working with the supremum norm (so when investigating $\mathcal K_0: \mathcal Y \to \mathcal Y$) it is helpful to strengthen the conditions on $k_0$ such that the range of $\mathcal K_0$ consists of continuous functions (in particular, in order to use the Arzela-Ascoli compactness criterion). We shall in fact do this in the next section, when discussing the final size as a function of $x$. But we do not elaborate these spectral considerations here, since we want to emphasize that an enormous simplification is achieved by assuming that $A$ is the product of functions of less than three variables. This is elaborated in Section \ref{section_sep1}. In conclusion of this section we refer to \cite{Breda2021, Breda2020, Breda} for recently developed numerical methods for the computation of $R_0$. And to \cite{Thieme2009a} for persistence results for an analogous model that does take demographic turnover into account.

\section{Herd Immunity}

When the outbreak progresses, the size of the susceptible subpopulation declines. At any time \(t\), we can perform a thought experiment: suppose we remove instantaneously and with very high probability every individual that contributes to the future force of infection, does the outbreak make a restart and flare up or does it die out? If it dies out, we say that ‘Herd Immunity’ has been reached. The implication is that the incidence will dwindle, as, on average, new cases generate less than one case (often expressed by saying that the effective reproduction number, denoted by \(R_{\text{eff}}(t)\), is less than one). The implication is NOT that in total there will be only a few future cases. Indeed, in actual fact there is a reservoir of already infected individuals that together generate a considerable force of infection and thus a prolonging of the outbreak and an increase of the outbreak size. When herd immunity is reached by vaccination, the population is safe. When herd immunity is reached during an outbreak, better times loom on the horizon, but the danger has not passed, the overshoot may be substantial (see \cite{Nguyen2023}).

For a homogeneous host population, the situation is simple: herd immunity is reached when the susceptible fraction passes the value \(1/R_0\). The complementary fraction $1 - 1/R_0$ is called the Herd Immunity Threshold, which is usually abbreviated to HIT.
In the SIR compartmental model, this coincides with the prevalence \(I\) and the force of infection \(\beta I\) reaching a maximum. As a consequence, there is a tendency to identify “reaching the peak” and “passing the HIT”. But this is unwarranted, if only since different indicators of “severity” may reach a peak at different moments in time. Indeed, even for the SEIR compartmental model it happens that the peak in the force of infection \(\beta I\) does not coincide with the peak in the prevalence, if one defines prevalence by \(E + I\). See Section \ref{subsect_peaks} below for other examples.

For a heterogeneous host population, the situation is, in general, rather complicated. For any measurable function \(s: \Omega \rightarrow [0,1]\), one can define \(R_{\text{eff}}\) by the methodology described in Section 3 and in \cite{Inaba2023}. The condition \(R_{\text{eff}} = 1\) defines a codimension one manifold in the infinite-dimensional space of bounded measurable functions mapping \(\Omega\) into \([0,1]\), and the point where this manifold is ‘passed’ may very well depend on the initial condition, i.e., on the precise way in which the outbreak is triggered, see Section \ref{subsect_well_hit} below and see \cite{Oliva2023}. As a consequence, the HIT, as the complement of the fraction of the population still susceptible upon reaching herd immunity, is NOT a well defined CONCEPT. (As a side remark we note that for uniform vaccination there is no ambiguity, provided the contact process is in no way influenced by vaccination status; for such a vaccination scenario the HIT is equal to $1 - 1/R_0$, also for a heterogeneous population.)

In Section \ref{section_sep2}, we shall show that the heterogeneous situation becomes as simple as the homogeneous situation if the dynamics can be fully described in terms of a scalar function of time. Following the footsteps \cite{Gomes2022, Montalban2022} of Gabriela Gomes this will allow us to show that heterogeneity can lead to a major “reduction” of the HIT (terminology is dangerous here; what we mean is, that the fraction of the population still susceptible when herd immunity is reached, is substantially higher than \(1/R_0\)). Note that in the very recent paper \cite{Ball2023} an example is presented where heterogeneity, in this case household structure, has the opposite effect!

We close with a word of warning: here we stay within the idealized world of models; using early phase data, to predict when herd immunity will be reached, is afflicted with serious and subtle difficulties, see \cite{Castro2020} and the references given there.

\section{The Final Size Equation}

The interpretation requires that, for fixed $x$, $s(t,x)$ is a monotone non-increasing function of $t$. (Standard arguments can be used to show that, if one prescribes for each $x$ the history of $s$ on an interval of the form $(-\infty, t_0]$ by a non-increasing function with values in $[0,1]$, then the equation defines a unique non-increasing extension to $(-\infty,+\infty)$ with values in $[0,1]$.) As a bounded monotone function must have a limit, we know that $s(\infty,x)$ exists. By passing to the limit in \eqref{main_equation} we obtain the so-called final size equation

\begin{equation}\label{eq_final_size}
     s(\infty,x) = \exp\Big( -\mathcal K_0 \, (1 - s(\infty, \cdot))(x) \Big).
\end{equation} 
Please note the general form of this equation: it only depends on the particular model by way of the representation of the NGO in terms of fractions.
Equation (\ref{eq_final_size}) has $s(\infty, \cdot) \equiv 1$ as trivial solution. This describes the situation in which the pathogen is \textit{not} introduced in the host population. From now on we adopt the hypotheses
\begin{equation}
\tag*{H\textsubscript{A\textsubscript 1}}
\forall \epsilon>0, \exists\delta>0 \mbox{ such that }
\int_{\Omega}\left|k_0(x_1,\xi)-k_0(x_2,\xi)\right|\Phi(d\xi)<\epsilon \mbox{ when } \left|x_1-x_2\right|<\delta\label{4.2}
\end{equation}
and 
\begin{equation}
\tag*{H\textsubscript{A\textsubscript 2}}
	\exists m~{\rm such~that~} \inf_{(x, \xi) \in \Omega \times \Omega} k^n(x, \xi) > 0~{\rm for~} n \geq m \label{4.3}
\end{equation}
where  $k^n$ is  defined inductively by
\begin{equation*}
	\begin{aligned}
		k^1(x, \xi) &:= k_0(x,\xi) \cr
		k^{n+1}(x,\xi) &:= \int_{\Omega} k(x,\eta) k^n(\eta, \xi) \Phi(d\eta),
	\end{aligned}
\end{equation*}
Below we demonstrate that these reasonable conditions on $A$ 
guarantee that 
\begin{itemize}
	\item [-] for $R_0 > 1 $, equation (\ref{eq_final_size}) has precisely one nontrivial solution taking values in $[0,1]$; in fact the values are bounded away from both 0 and 1
	\item [-] for $R_0 \leq 1$  no such nontrivial solution exists.
\end{itemize}
So what happens when $R_0 \leq 1$ and we do infect a very small fraction of the host individuals with the pathogen? As explained in the elaboration of Exercise 1.22.iv in \cite{Diekmann2013}, the final size will be a Lipschitz continuous function of the size of the introduction, hence will be of the same order of magnitude as the introduction. In contrast, when $R_0 > 1$ such an introduction, no matter how small, causes a large outbreak as described by the nontrivial solution of (\ref{eq_final_size}). Here we add  a warning: our deterministic description ignores the demographic stochasticity inherent to small {\it numbers}. We refer to section 1.3.4 in \cite{Diekmann2013} for a description of the effects of demographic stochasticity. 

   The condition \ref{4.2} on $A$ has a double effect: it first guarantees that functions in the range of $\mathcal K_0$ are continuous (a weaker condition would suffice for that) and, next, that the restriction of $\mathcal K_0$ to the continuous functions on $\Omega$ is compact (by the Arzela-Ascoli Theorem) if $\Omega$ is compact. The condition \ref{4.3} guarantees that $\mathcal K_0$ is irreducible in the strong sense that a power of $\mathcal K_0$ maps the positive cone, with the zero element excluded, into the interior of that cone (just like a primitive matrix). These properties of $\mathcal K_0$ will be used in the proofs below. These proofs are inspired by \cite{Schmidt1990, Thieme1979, Thieme1980} and Appendix B in \cite{Rass2003}. For other interesting aspects of the final size equation we refer to \cite{Andreasen2011,Barril2023,Inaba2014,Katriel2012, Brauer2009}.

Defining
\begin{equation}
	y(x) = 1 - s(\infty, x)
\end{equation}
we rewrite (\ref{eq_final_size}) as
\begin{equation}\label{eq_final_size_rew}
	y = F(y)
\end{equation}
where 
\begin{equation}\label{eq_final_size_rew_F}
	F(y)(x) := 1 - e^{-(\mathcal K_0 y)(x)}
\end{equation}
In the following we use ``$\geq$'' to denote the order relation induced by the cone of nonnegative functions on $\Omega$. So 
\begin{equation}
	y_1 \geq y_2 ~ \iff ~ y_1(x)  \geq y_2 (x) {\rm , }~ \forall x \in \Omega {\rm .} 
\end{equation}
$\mathcal K_0$, being a positive linear operator, is \textit{order preserving}:
\begin{equation}\label{eq_K_orderpreserving}
	y_1 \geq y_2 \Rightarrow \mathcal K_0 y_1 \geq \mathcal K_0 y_2
\end{equation}
\begin{lemma} $F$ is order preserving, i.e., 
	\begin{equation}
		y_1 \geq y_2 \Rightarrow F(y_1) \geq F(y_2).
	\end{equation}
\end{lemma}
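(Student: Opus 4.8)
The plan is to chain together two monotonicity facts: the order-preservation of the linear operator $\mathcal K_0$, which is already recorded in \eqref{eq_K_orderpreserving} as a consequence of its positivity, and the monotonicity of the scalar function $g(t) := 1 - e^{-t}$ on the real line. The composition $F = g \circ \mathcal K_0$ (applied pointwise) is then automatically order preserving, and the whole argument is essentially a one-line verification.

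Concretely, I would proceed as follows. Suppose $y_1 \geq y_2$, i.e. $y_1(x) \geq y_2(x)$ for every $x \in \Omega$. By \eqref{eq_K_orderpreserving} we get $\mathcal K_0 y_1 \geq \mathcal K_0 y_2$, that is, $(\mathcal K_0 y_1)(x) \geq (\mathcal K_0 y_2)(x)$ for all $x \in \Omega$. Next, observe that $g(t) = 1 - e^{-t}$ satisfies $g'(t) = e^{-t} > 0$, so $g$ is strictly increasing on $\mathbb R$; hence $g\big((\mathcal K_0 y_1)(x)\big) \geq g\big((\mathcal K_0 y_2)(x)\big)$ for every $x$. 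Reading off the definition \eqref{eq_final_size_rew_F}, this is precisely $F(y_1)(x) \geq F(y_2)(x)$ for all $x$, i.e. $F(y_1) \geq F(y_2)$, as claimed.

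There is no real obstacle here; the only points worth a brief remark are that $\mathcal K_0$ being a \emph{positive} linear operator (the kernel $k_0$ is nonnegative) is exactly what makes it order preserving, and that the argument is entirely pointwise, so it needs nothing about continuity, compactness, or the hypotheses \ref{4.2}, \ref{4.3}. For later use one may also note in passing that, since $g$ is \emph{strictly} increasing and $\mathcal K_0$ strictly increases the value at $x$ whenever $y_1 - y_2$ has positive $\Phi$-mass (by the irreducibility in \ref{4.3}), $F$ is in fact strictly order preserving on nonzero increments, but that sharpening is not needed for the statement of the lemma itself.
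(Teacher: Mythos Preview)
Your proof is correct and follows exactly the same approach as the paper: combine the order-preservation of $\mathcal K_0$ recorded in \eqref{eq_K_orderpreserving} with the fact that $z \mapsto 1 - e^{-z}$ is monotone increasing. The paper states this in one line; your version simply spells out the pointwise argument in more detail.
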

\begin{proof}[Proof]
Combine (\ref{eq_K_orderpreserving}) with the fact that $z \longmapsto 1 - e^{-z}$ is monotone increasing.
\end{proof}
We want to find solutions $y \geq 0$ of (\ref{eq_final_size_rew}). The form (\ref{eq_final_size_rew_F}) of $F$ implies that any such solution satisfies $y < 1$, in the sense that $y(x)<1$ for all $x$.
\begin{theorem} If $R_0 < 1$, equation (\ref{eq_final_size_rew}) has only the trivial solution $y=0$.
\end{theorem}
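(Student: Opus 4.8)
The plan is to dominate the nonlinear map $F$ by the linear operator $\mathcal K_0$ via the elementary inequality $1-e^{-z}\le z$ (valid for $z\ge 0$), and then to iterate, exploiting that the spectral radius of $\mathcal K_0$ is $R_0<1$. First I would take any solution $y\ge 0$ of (\ref{eq_final_size_rew}). Since $y\ge 0$ and $\mathcal K_0$ is a positive operator, $\mathcal K_0 y\ge 0$, so applying $1-e^{-z}\le z$ pointwise with $z=(\mathcal K_0 y)(x)$ gives $y(x)=1-e^{-(\mathcal K_0 y)(x)}\le(\mathcal K_0 y)(x)$ for every $x$; in the cone order this reads $y\le\mathcal K_0 y$. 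Applying the order-preserving operator $\mathcal K_0$ repeatedly, using (\ref{eq_K_orderpreserving}) and induction, I obtain $y\le\mathcal K_0 y\le\mathcal K_0^2 y\le\cdots\le\mathcal K_0^n y$ for every $n\ge 1$.

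Because $y\ge 0$ and $\mathcal Y$ carries the supremum norm, this chain yields, for each $x$ and each $n$, the estimate $0\le y(x)\le(\mathcal K_0^n y)(x)\le\|\mathcal K_0^n y\|\le\|\mathcal K_0^n\|\,\|y\|$. Here $\mathcal K_0$ is a bounded operator on $\mathcal Y$: from (\ref{k0_finite}) together with $\Phi(\Omega)=1$ one has $\|\mathcal K_0\|\le N\sup_{(x,\xi)\in\Omega\times\Omega}k_0(x,\xi)<\infty$. Hence Gelfand's spectral radius formula applies, $\|\mathcal K_0^n\|^{1/n}\to R_0$, and since $R_0<1$ I may fix $q$ with $R_0<q<1$ so that $\|\mathcal K_0^n\|\le q^n$ for all large $n$, whence $\|\mathcal K_0^n\|\to 0$. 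Letting $n\to\infty$ in the displayed estimate forces $y(x)=0$ for every $x$, i.e. $y=0$.

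I do not expect a genuine obstacle in this direction: everything reduces to the scalar inequality $1-e^{-z}\le z$, the order-preservation of $\mathcal K_0$, and Gelfand's formula, and the only point to keep clean is the simultaneous use of the positivity of $y$, the monotonicity of $\mathcal K_0$, and the supremum norm in the chain of inequalities above. Note in particular that neither the regularity hypothesis \ref{4.2} nor the irreducibility hypothesis \ref{4.3} is needed here; those enter only in the complementary and harder cases $R_0=1$ and $R_0>1$, where one must produce (and pin down) a nontrivial solution rather than merely exclude one.
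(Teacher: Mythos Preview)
Your proof is correct and follows essentially the same route as the paper: the inequality $1-e^{-z}\le z$ gives $y\le\mathcal K_0 y$, iteration yields $y\le\mathcal K_0^n y$, and Gelfand's formula for the spectral radius then forces $\sup y=0$. Your additional remark that neither \ref{4.2} nor \ref{4.3} is invoked here is accurate and matches the paper's proof.
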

\begin{proof}
Since $1 - e^{-z} \leq z$ for $z \geq 0$ (see Lemma \ref{lemmaA1}.i) the inequality $F(y) \leq \mathcal K_0 y$ holds. So (\ref{eq_final_size_rew}) yields $y \leq \mathcal K_0 y$ and, by induction, $y \leq \mathcal K_{0}^n y$ for $n \geq 1$. Taking the supremum with respect to $x$ at both sides, it follows that 
\[\sup_{x \in \Omega} y \leq \| \mathcal K_0^n \| \sup_{x \in \Omega} y \].

Now recall Gelfand's formula for the spectral radius 
\[R_0 = \rho(\mathcal K_0) = \lim_{n \to \infty} \| \mathcal K_0^n \|^{\frac{1}{n}}.\]
Let $n$ be so large that $\| \mathcal K_0^n\|^{\frac{1}{n} }< 1$, then also $\| \mathcal K_0^n \| < 1$ and the inequality above can only hold if $\sup y = 0$.

\end{proof}

\begin{lemma}
Let $y$ be a nontrivial solution of (\ref{eq_final_size_rew}). If \ref{4.3} holds, then $y(x) > 0$ for all $x \in \Omega$.
\end{lemma}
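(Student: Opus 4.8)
The plan is to obtain a strictly positive uniform lower bound for $y$ by combining a one-sided exponential estimate with the strong irreducibility hypothesis \ref{4.3}. Since any nontrivial solution satisfies $0\le y<1$, and $k_0$ is bounded by (\ref{k0_finite}), the function $\mathcal K_0 y$ is bounded: $0\le (\mathcal K_0 y)(x)\le C$ for all $x$, where $C:=N\sup_{(x,\xi)\in\Omega\times\Omega}k_0(x,\xi)<\infty$. Now I would use the elementary estimate $1-e^{-z}\ge e^{-C}z$ for $0\le z\le C$ (it follows from $1-e^{-z}=\int_0^z e^{-t}\,dt\ge e^{-C}z$; cf.\ Lemma \ref{lemmaA1}). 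Applied pointwise with $z=(\mathcal K_0 y)(x)$, the fixed-point relation $y=F(y)$ gives
\begin{equation*}
	y\;\ge\; e^{-C}\,\mathcal K_0 y .
\end{equation*}
Applying the order-preserving linear operator $\mathcal K_0$ to this inequality and chaining, one gets by induction $y\ge e^{-nC}\,\mathcal K_0^{\,n}y$ for every $n\ge 1$.

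The second ingredient is that $\int_\Omega y(\xi)\,\Phi(d\xi)>0$. Here is where the fixed-point structure (not merely nontriviality) is used: if $\int_\Omega y\,\Phi(d\xi)=0$ then, as $y\ge 0$, $y$ vanishes $\Phi$-almost everywhere, so $(\mathcal K_0 y)(x)=N\int_\Omega k_0(x,\xi)y(\xi)\,\Phi(d\xi)=0$ for all $x$, whence $y=F(y)=1-e^{-0}=0$ identically, contradicting that $y$ is nontrivial. Then, taking $n=m$ with $m$ as in \ref{4.3} and iterating the definition of $k^n$ to get $(\mathcal K_0^{\,m}y)(x)=N^m\int_\Omega k^m(x,\xi)y(\xi)\,\Phi(d\xi)$, I would conclude
\begin{equation*}
	y(x)\;\ge\; e^{-mC}\,(\mathcal K_0^{\,m}y)(x)\;\ge\; e^{-mC}N^m\Big(\inf_{(x,\xi)\in\Omega\times\Omega}k^m(x,\xi)\Big)\int_\Omega y(\xi)\,\Phi(d\xi)\;>\;0
\end{equation*}
for every $x\in\Omega$, which is the assertion.

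I expect the only delicate point to be the step $\int_\Omega y\,\Phi(d\xi)>0$: a priori a nonzero bounded measurable function might vanish $\Phi$-almost everywhere, and it is precisely the identity $y=F(y)$ that excludes this. Everything else is routine bookkeeping with order-preservation of $\mathcal K_0$, the two standard exponential inequalities, and the definition of the iterated kernels $k^n$.
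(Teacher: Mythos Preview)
Your argument is correct and follows essentially the same route as the paper: obtain a linear lower bound $y\ge c\,\mathcal K_0 y$ from a secant estimate for $1-e^{-z}$, iterate to $y\ge c^n\mathcal K_0^{\,n}y$, and then invoke \ref{4.3} to force pointwise positivity. The only cosmetic differences are your choice of constant ($e^{-C}$ with $C=N\sup k_0$, versus the paper's $\tfrac{1-e^{-z_0}}{z_0}$ with $z_0=\sup\mathcal K_0 y$ from Lemma~\ref{lemmaA1}.ii) and your explicit justification of $\int_\Omega y\,\Phi(d\xi)>0$, which the paper leaves implicit.
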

\begin{proof}[Proof]
Define $z_0=\sup\limits_{x\in\Omega}\left(\mathcal K_0y\right)(x)$ and
 use the inequality (ii) of Lemma \ref{lemmaA1} to deduce that
\[y = F(y) \geq (\frac{1 - e^{-z_0}}{z_0}) \mathcal K_0 y.\]
By induction 
\[y \geq \left(\frac{1 - e^{-z_0}}{z_0}\right)^n \mathcal K_0^n y\] and hence \ref{4.3} implies that $y(x)>0$. 
\end{proof}

\begin{theorem}
Let $\Omega$ be compact and assume that \ref{4.2} and \ref{4.3} hold. Then
equation (\ref{eq_final_size_rew}) has at most one nontrivial solution.
\end{theorem}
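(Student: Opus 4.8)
The plan is to run a Krasnoselskii-type argument based on the strict concavity of the scalar map $g(z):=1-e^{-z}$ on $[0,\infty)$ (note $g(0)=0$ and $g(\kappa z)>\kappa g(z)$ for every $z>0$ and $\kappa\in(0,1)$), combined with the strong positivity that \ref{4.3} provides. First I would record the regularity of any nontrivial solution $y$. Writing $y=F(y)=1-e^{-\mathcal K_0 y}$ and estimating, exactly as in the discussion of the range of $\mathcal K_0$, $|(\mathcal K_0 y)(x_1)-(\mathcal K_0 y)(x_2)|\le N\|y\|\int_\Omega|k_0(x_1,\xi)-k_0(x_2,\xi)|\Phi(d\xi)$, hypothesis \ref{4.2} shows $\mathcal K_0 y$ is continuous, hence so is $y$; since $\Omega$ is compact, $y$ attains its extrema. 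By the preceding Lemma, $y(x)>0$ for all $x$, and then the equation itself gives $(\mathcal K_0 y)(x)=-\log(1-y(x))>0$ for all $x$; in particular $0<\min_\Omega y\le\max_\Omega y<1$.

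Now let $y_1,y_2$ be two nontrivial solutions and set
\[
\kappa:=\inf_{x\in\Omega}\frac{y_2(x)}{y_1(x)}.
\]
By continuity and strict positivity of $y_1,y_2$ on the compact set $\Omega$, this infimum is attained and $\kappa\in(0,\infty)$, and $y_2\ge\kappa y_1$. I claim $\kappa\ge 1$. Suppose not, so $0<\kappa<1$. Since $(\mathcal K_0 y_1)(x)>0$ for every $x$, strict concavity of $g$ yields, pointwise,
\[
F(\kappa y_1)(x)=g\big(\kappa\,(\mathcal K_0 y_1)(x)\big)>\kappa\,g\big((\mathcal K_0 y_1)(x)\big)=\kappa\,F(y_1)(x)=\kappa\,y_1(x).
\]
The function $x\mapsto F(\kappa y_1)(x)-\kappa y_1(x)$ is continuous and strictly positive on the compact $\Omega$, hence bounded below by some $\epsilon>0$. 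Using $F(y_2)=y_2$, order preservation of $F$, $y_2\ge\kappa y_1$, and $y_1(x)\le\max_\Omega y_1=:\bar y_1<1$, we obtain for all $x$
\[
y_2(x)=F(y_2)(x)\ge F(\kappa y_1)(x)\ge \kappa y_1(x)+\epsilon\ge\Big(\kappa+\frac{\epsilon}{\bar y_1}\Big)y_1(x),
\]
so $y_2/y_1\ge \kappa+\epsilon/\bar y_1$ uniformly, contradicting the definition of $\kappa$. Hence $\kappa\ge 1$, i.e.\ $y_2\ge y_1$; interchanging the roles of $y_1$ and $y_2$ gives $y_1\ge y_2$, so $y_1=y_2$.

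The routine parts are the regularity claim and the elementary inequality $g(\kappa z)>\kappa g(z)$ (which could be folded into Lemma \ref{lemmaA1}). The one genuinely quantitative step — and the main obstacle — is upgrading the pointwise strict inequality $F(\kappa y_1)>\kappa y_1$ to a uniform gap $\epsilon>0$, since that is what actually lets one push $\kappa$ above its supposed supremal value; this is precisely where compactness of $\Omega$ (a continuous strictly positive function on a compact set has a positive minimum) is indispensable. The same compactness, together with the strict positivity of nontrivial solutions supplied by \ref{4.3} via the previous Lemma, is what makes $\kappa$ well defined, finite, and positive.
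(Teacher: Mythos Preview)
Your proof is correct and follows essentially the same Krasnoselskii-type ratio argument as the paper: define $\kappa=\min y_2/y_1$, assume $\kappa<1$, and derive a contradiction from the strict concavity inequality $g(\kappa z)>\kappa g(z)$ (the paper's Lemma \ref{lemmaA1}.iv) combined with order preservation of $F$ and the strict positivity of nontrivial solutions. The only tactical difference is that the paper evaluates directly at the point $\bar x$ where the ratio is attained and obtains the contradiction $y_2(\bar x)>\kappa y_1(\bar x)$ in one line, so it never needs to upgrade the pointwise strict inequality to a uniform gap $\epsilon$; your extra step is harmless but unnecessary.
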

\begin{proof}[Proof]
Let $y$ and $z$ be nontrivial solutions. Both are continuous  and strictly positive on the compact domain $\Omega$. Define
\[\theta = \min_{x \in \Omega} \frac{y(x)}{z(x)}\]
and assume that $\theta < 1$. Let $\bar{x} \in \Omega$ be such that $y(\bar{x}) = \theta z(\bar{x})$. Now use Lemma \ref{lemmaA1}.iv to obtain the contradiction
\begin{equation*}
	\begin{aligned}
		y(\bar{x}) &= F(y)(\bar{x}) = 1 - e^{- (\mathcal K_0 y)(\bar{x})} \geq 1 - e^{-\theta (\mathcal K_0 z)(\bar{x})} \cr
		& > \theta (1 - e^{-(\mathcal K_0 z)(\bar{x})}) = \theta z(\bar{x}).
	\end{aligned}
\end{equation*}
So $\theta \geq 1$ and $y(x) \geq z(x)$. But if we consider 
\[\tilde{\theta} = \min_{x \in \Omega} \frac{z(x)}{y(x)}\] 
exactly the same argument yields $\tilde{\theta} \geq 1$ and $z(x) \geq y(x)$. We conclude that $y(x) = z(x)$ for all $x \in \Omega$.
\end{proof}

\begin{theorem}
Let $\Omega$ be compact and assume that \ref{4.2} and \ref{4.3} hold.
If $R_0 > 1$, equation (\ref{eq_final_size_rew}) has precisely one nontrivial solution.
\end{theorem}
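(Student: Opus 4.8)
Uniqueness is already settled by the preceding theorem, so the task is to \emph{construct} a nontrivial solution of $y=F(y)$ when $R_0>1$. The plan is a monotone-iteration ("sub-/supersolution") argument, squeezing the iterates $F^n$ between a small strictly positive subsolution and the constant supersolution $y\equiv1$. The first ingredient I would assemble is the spectral one: since $\Omega$ is compact and \ref{4.2} holds, the range of $\mathcal K_0$ consists of continuous functions and the restriction of $\mathcal K_0$ to $C(\Omega)$ is compact (Arzel\`a--Ascoli); since \ref{4.3} holds, this operator is strongly irreducible. Hence, by the Krein--Rutman theorem, $R_0=\rho(\mathcal K_0)$ is an eigenvalue with a strictly positive continuous eigenfunction $\phi$, $\mathcal K_0\phi=R_0\phi$; normalize so that $\|\phi\|=1$. (One also uses here that the spectral radius of $\mathcal K_0$ computed on $C(\Omega)$ equals $R_0$, which holds because the range of $\mathcal K_0$ already lies in $C(\Omega)$.)

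Next I would build the subsolution. Put $y_0:=\varepsilon\phi$ with $\varepsilon>0$ small. Using the elementary inequality $1-e^{-z}\ge z-\tfrac12 z^2$ ($z\ge0$), we get $F(y_0)(x)=1-e^{-\varepsilon R_0\phi(x)}\ge \varepsilon\phi(x)\bigl(R_0-\tfrac12\varepsilon R_0^2\phi(x)\bigr)$; since $0\le\phi(x)\le1$ and $R_0>1$, choosing $\varepsilon\le 2(R_0-1)/R_0^2$ makes the bracket $\ge1$, so $F(y_0)\ge y_0$. Shrinking $\varepsilon$ further if needed (any $\varepsilon\le1$ works, as $\|\phi\|=1$) we also have $0\le y_0\le1$ pointwise. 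On the other side, $F(1)(x)=1-e^{-(\mathcal K_0 1)(x)}<1$, so $y\equiv1$ is a supersolution lying above $y_0$.

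Now iterate: set $y_{n+1}:=F(y_n)$. By Lemma~5.1 ($F$ order preserving) together with $y_0\le y_1$ and $y_0\le1$, induction gives $y_0\le y_1\le\cdots\le y_n\le y_{n+1}\le\cdots\le1$. For each fixed $x$ the sequence $y_n(x)$ is increasing and bounded, hence converges to some $y_\infty(x)\in[\varepsilon\phi(x),1]$; $y_\infty$ is measurable as a pointwise limit. Since $\Phi$ is a finite measure and $0\le y_n\le1$, dominated (or monotone) convergence yields $(\mathcal K_0 y_n)(x)\to(\mathcal K_0 y_\infty)(x)$ for every $x$, and then continuity of $z\mapsto1-e^{-z}$ gives $y_{n+1}(x)=1-e^{-(\mathcal K_0 y_n)(x)}\to 1-e^{-(\mathcal K_0 y_\infty)(x)}=F(y_\infty)(x)$. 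Since also $y_{n+1}(x)\to y_\infty(x)$, we conclude $y_\infty=F(y_\infty)$. As $y_\infty\ge\varepsilon\phi>0$ the solution is nontrivial; being in the range of $F$ it is continuous, hence on the compact set $\Omega$ it is bounded away from $0$, and from the form of $F$ it satisfies $y_\infty<1$, hence (by continuity again) is bounded away from $1$ as well. Together with the preceding theorem, $y_\infty$ is \emph{the} unique nontrivial solution, which completes the proof.

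\textbf{Main obstacle.} Everything downstream of the eigenfunction---constructing the subsolution, the monotone iteration, and passing to the limit---is routine. The real work is the Krein--Rutman step: obtaining a \emph{strictly positive} eigenfunction for the leading eigenvalue $R_0$, which is precisely where compactness (\ref{4.2} plus compactness of $\Omega$) and strong irreducibility (\ref{4.3}) are indispensable, and where one must also check that the spectral radius is unchanged when $\mathcal K_0$ is restricted from $\mathcal Y$ to $C(\Omega)$. If one wished to avoid invoking Krein--Rutman as a black box, an alternative is to extract a subsolution directly from $R_0=\lim_n\|\mathcal K_0^n\|^{1/n}>1$ together with irreducibility (which forces $\|\mathcal K_0^n\|$ to be comparable to $\inf_{x,\xi}k^n(x,\xi)$ for large $n$), but the eigenfunction route is cleaner.
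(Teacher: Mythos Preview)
Your proof is correct and follows essentially the same route as the paper: invoke Krein--Rutman (via compactness from \ref{4.2} and irreducibility from \ref{4.3}) to obtain a positive eigenfunction at eigenvalue $R_0$, take a small multiple of it as a subsolution, and iterate $F$ monotonically upward to a fixed point. The only cosmetic differences are that the paper uses the inequality $1-e^{-z}\ge(1-\epsilon)z$ for small $z$ (Lemma~A.iii) instead of your $1-e^{-z}\ge z-\tfrac12 z^2$, and it asserts uniform convergence of the iterates directly rather than passing to the limit via dominated convergence; your treatment of the limit passage and of the $\mathcal Y$-versus-$C(\Omega)$ spectral-radius issue is in fact more explicit than the paper's.
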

\begin{proof}Assumption \ref{4.2} guarantees that $\mathcal K_0$ is compact and assumption \ref{4.3} that $\mathcal K_0$ is irreducible. On account of the (sharpening of the) Krein-Rutman Theorem (by de Pagter) we conclude that
 $\mathcal K_0$ has an eigenvector $y_0 \geq 0$ corresponding to $R_0$, normalized by $\max_{x \in \Omega} y_0(x)=1$.  Choose $\epsilon > 0$ small enough to have $(1-\epsilon) R_0 > 1$. Choose $\delta=\delta(\epsilon)$ as in Lemma \ref{lemmaA1}.iii. Let $\tilde{\delta} = \delta / R_0$. Then 
\begin{equation*}
	\begin{aligned}
		F(\tilde{\delta} y_0) &= 1 - e^{-\tilde{\delta} \mathcal K_0 y_0} = 1 - e^{-\delta y_0} \geq (1 - \epsilon) \delta y_0 \cr
		& = (1 - \epsilon) R_0 \tilde{\delta} y_0 \geq \tilde{\delta} y_0.	
	\end{aligned}
\end{equation*}
Iterating $F$, starting with $\tilde{\delta} y_0$, we obtain an increasing and bounded sequence that converges uniformly on $\Omega$ to a solution.
\end{proof}


  Boundedness of the trait space $\Omega$ is essential for the uniform convergence towards the final situation. Indeed, when the trait specifies the spatial position on the line or in the plane, expansion of a locally introduced infection is characterized by an asymptotic speed of propagation, which is equal to the smallest possible speed $c_0$ of plane wave solutions, see \cite{Rass2003} and the references in there. 

 Intuitively, it is crystal clear that the Basic Reproduction Number associated with the situation AFTER the outbreak should be less than one. Remarkably, a simple straightforward proof does not exist, as far as we know. The proof presented below is inspired by the proof given in \cite{Schmidt1990} for the finite dimensional case.
\begin{theorem}
Assume that $\Omega$ is compact, that \ref{4.2} and \ref{4.3} hold and that $R_0 > 1$.  The NGO corresponding to the situation after the outbreak has spectral radius less than one.
\end{theorem}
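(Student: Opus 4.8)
The plan is to make the after-outbreak NGO explicit in the fraction representation of Section~3 and then mimic the Gelfand-formula argument used to treat the case $R_0<1$. Write $\sigma(x):=s(\infty,x)$ and let $y:=1-\sigma$ be the unique nontrivial solution of the final size equation, which exists because $R_0>1$. First I would record the properties we need: by (\ref{eq_final_size}) we have $\sigma=\exp(-\mathcal K_0 y)$, and since functions in the range of $\mathcal K_0$ are continuous by \ref{4.2}, both $\sigma$ and $y$ are continuous; by the lemma guaranteeing that nontrivial solutions are strictly positive, $y(x)>0$ for all $x$, hence $0<\sigma(x)<1$ everywhere, and since $\Omega$ is compact $\min_{x\in\Omega}y(x)>0$. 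Replacing the all-susceptible weight by $\sigma$ in the kernel $k(x,\xi)=\bar s(x)k_0(x,\xi)$ turns $\mathcal K_0$ into the operator $\mathcal K_{\mathrm{eff}}$ defined by $(\mathcal K_{\mathrm{eff}}\psi)(x):=\sigma(x)(\mathcal K_0\psi)(x)$; arguing exactly as in Section~3 (conjugation by the operator $T$ of (\ref{T_operator}) does not change the spectral radius), the after-outbreak NGO is represented in fractions by $\mathcal K_{\mathrm{eff}}$, so it suffices to prove $\rho(\mathcal K_{\mathrm{eff}})<1$.

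The one genuine idea is that the final size equation makes $\mathcal K_{\mathrm{eff}}y$ completely explicit: since $(\mathcal K_0 y)(x)=-\log\sigma(x)$,
\[
(\mathcal K_{\mathrm{eff}}y)(x)=\sigma(x)(\mathcal K_0 y)(x)=-\sigma(x)\log\sigma(x).
\]
I would then invoke the elementary inequality $-u\log u<1-u$ for $u\in(0,1)$, which holds because $g(u):=1-u+u\log u$ satisfies $g(1)=0$ and $g'(u)=\log u<0$ on $(0,1)$, so $g>0$ there. Applying this with $u=\sigma(x)\in(0,1)$ gives $(\mathcal K_{\mathrm{eff}}y)(x)<y(x)$ for every $x\in\Omega$; since $\mathcal K_{\mathrm{eff}}y$ and $y$ are continuous and $\Omega$ is compact, there is a constant $c<1$ with $\mathcal K_{\mathrm{eff}}y\le c\,y$ on $\Omega$.

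From here I would conclude exactly as in the proof that $R_0<1$ forces $y=0$. Since $\mathcal K_{\mathrm{eff}}$ is a positive, hence order preserving, operator, $\mathcal K_{\mathrm{eff}}y\le c\,y$ iterates to $\mathcal K_{\mathrm{eff}}^n y\le c^n y$ for all $n\ge1$. Every $w\in\mathcal Y$ satisfies $|w|\le\big(\|w\|/\min_{x\in\Omega}y(x)\big)\,y$ pointwise, so positivity gives $|\mathcal K_{\mathrm{eff}}^n w|\le\mathcal K_{\mathrm{eff}}^n|w|\le\big(\|w\|/\min_{x\in\Omega}y(x)\big)\,c^n y$, whence $\|\mathcal K_{\mathrm{eff}}^n\|\le\big(\|y\|/\min_{x\in\Omega}y(x)\big)\,c^n$. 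Gelfand's formula then yields $\rho(\mathcal K_{\mathrm{eff}})=\lim_{n\to\infty}\|\mathcal K_{\mathrm{eff}}^n\|^{1/n}\le c<1$. The main obstacle, such as it is, will be purely bookkeeping: checking that $\mathcal K_{\mathrm{eff}}$ really is the after-outbreak NGO in the fraction representation and that $\sigma$ and $y$ have the continuity and strict positivity claimed --- both routine given the hypotheses already in force. The substantive content is the combination of the identity $(\mathcal K_{\mathrm{eff}}y)(x)=-\sigma(x)\log\sigma(x)$ with the strict inequality $-u\log u<1-u$, which is precisely what gives a spectral radius strictly below $1$ rather than merely at most $1$.
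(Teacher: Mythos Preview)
Your proof is correct and reaches the conclusion by a genuinely different, more elementary, route than the paper. Both proofs work with the operator $\bar{\mathcal K}_0:=L\mathcal K_0$ (your $\mathcal K_{\mathrm{eff}}$), where $L$ is multiplication by $\sigma=1-y$, and both hinge on the same scalar inequality, which in your formulation reads $-u\log u<1-u$ for $u\in(0,1)$ and in the paper's reads $\rho z+(1-z)\ln(1-z)>0$ for $\rho\ge1$, $0<z<1$. The difference lies in how the spectral bound is extracted. The paper invokes the Krein--Rutman theorem for the compact adjoint of $\bar{\mathcal K}_0$ on $M(\Omega)$ to obtain a positive eigenmeasure $\mu$ at the spectral radius $\rho$, then integrates the identity $(1-y)[\mathcal K_0 y+\ln(1-y)]=0$ against $\mu$ to produce $\int_\Omega[\rho y+(1-y)\ln(1-y)]\,d\mu=0$, which the inequality rules out for $\rho\ge1$. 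You instead exploit the strict positivity and continuity of $y$ on the compact $\Omega$ to turn the pointwise strict inequality $\bar{\mathcal K}_0 y<y$ into a uniform one $\bar{\mathcal K}_0 y\le cy$ with $c<1$, and then bound $\|\bar{\mathcal K}_0^n\|$ directly via Gelfand's formula, exactly in the spirit of the earlier proof that $R_0<1$ forces $y=0$. Your route sidesteps Krein--Rutman and the adjoint entirely at the price of using the compactness of $\Omega$ and the continuity of $y$ a second time (to get $\min_\Omega y>0$); the paper's route is the more classical positive-operator argument. One minor bookkeeping point: the paper identifies the after-outbreak NGO in fractions as $\mathcal K_0 L$ (the factor $\sigma$ enters via $\Phi(d\xi)\mapsto\sigma(\xi)\Phi(d\xi)$), and then passes to $L\mathcal K_0$ by similarity; your $\mathcal K_{\mathrm{eff}}=L\mathcal K_0$ is that similar operator rather than the NGO itself, but since $L$ is invertible the spectral radii coincide, so your reduction is sound.
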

\begin{proof}
To describe the situation after the outbreak, we replace $\Phi$ by the measure 
\begin{equation}
    \omega \mapsto \int_{\omega} s(\infty,x) \Phi(dx)
\end{equation}
where $s(\infty,.)$ is the unique nontrivial solution of (\ref{eq_final_size}). Introducing
\begin{equation}
y(x):= 1 - s(\infty,x)
\end{equation}
we have
\begin{equation}
 \mathcal K_0^{\mbox{after}} = \mathcal K_0 L
\end{equation}
where $L : \mathcal Y \to \mathcal Y$ is defined by
\begin{equation}
L \psi = (1-y) \psi
\end{equation}
So if we define
\begin{equation}
  \bar{\mathcal K}_0 = L \mathcal{K_0}
\end{equation}
then
\begin{equation}
  \bar{\mathcal K}_0 = L \mathcal K_0^{\mbox{after}} L^{-1}
\end{equation}
We conclude that $ \bar{\mathcal K}_0$  and $\mathcal K_0^{\mbox{after}}$ are similar, consequently have the same spectrum and therefore have the same spectral radius.

From \ref{4.2} it follows that $ \bar{\mathcal K}_0$, as a linear operator on $C(\Omega)$, is compact. Consequently, its adjoint, acting on $M(\Omega)$, is compact too. By the Krein-Rutman Theorem, the spectral radius is an eigenvalue of this adjoint with a nontrivial positive measure $\mu$ as the corresponding eigenvector. We denote the spectral radius by $\rho$.
Now rewrite (\ref{eq_final_size}) as
\begin{equation}
  \mathcal K_0 y + \ln{(1-y)} = 0
\end{equation}
multiply both sides by $1 - y$ and integrate against $\mu$ over $\Omega$. This leads to the identity
\begin{equation}
 N \int_{\Omega} \mu(dx) \left[ \rho y(x) + ( 1 - y(x) ) \ln{( 1 - y(x) )} \right]  = 0
 \label{Nintomega}
\end{equation}
For $\rho \geq 1$ and $0 < z < 1$ the inequality
\begin{equation}
  \rho z + ( 1 - z )\ln{( 1 - z )} > 0
\end{equation}
holds. As a consequence, the left-hand side of (\ref{Nintomega}) is strictly positive for $\rho \geq 1$ and it follows that necessarily $\rho < 1$. 

\end{proof}

\section{Separable Mixing: reduction to a scalar renewal equation}\label{section_sep1}

A preliminary conclusion: type-structure complicates epidemic models, but all the well-known results from the single-type situation do have a multi-type analogon. What we want to know, however, is what impact heterogeneity has on the dynamics. When it comes to doing calculations, the difference between the structured and the unstructured situation can be enormous. 

Following Section 8.4 of \cite{Diekmann2013} we now show how the assumption of separable mixing, (6.1) below, allows us to work with scalar quantities and thus facilitates the computational aspect tremendously. The key point is that various operators have a one-dimensional range. The interpretation is as follows: whenever the type at the moment of becoming infected is following an a priori given distribution (in particular independently of the type of the infecting individual), newly infected individuals are identical in a stochastic sense and therefore we know how to take averages. The aim of this section is to demonstrate that this principle is not restricted to $R_0$, but extends to other aspects of the spread of infectious agents. 

In recent Covid-driven work (\cite{Gomes2022}, \cite{Montalban2022}, \cite{Tkachenko2021}, \cite{Neipel2020}, \cite{Wong2020}, 
also see \cite{Manrubia2020}), 
this feature was very effectively exploited. See \cite{Novozhilov2008} for a pre-Covid description of the main idea.

   The key mathematical assumption is that the kernel $A$ decomposes into two factors, one describing the influence of the type $x$ of the one that may become infected and one describing the influence of the type $\xi$ and the age-since-infection $\tau$ of the potential infector, i.e.,
   
\begin{equation} \label{sep_two}
    A(\tau, x, \xi) = a(x) g(\tau, \xi)
\end{equation}

When (\ref{sep_two}) holds, the operators $\mathcal K_\lambda$ defined in (\ref{K_lambda}) have one-dimensional range spanned by $a$. Consequently

\begin{equation}
    R_0 = N \int_\Omega \left(\int_{0}^{\infty} g(\tau,\eta) d\tau \right) a(\eta) \Phi(d\eta)
\end{equation}
while $r$ is the unique REAL root of the Euler-Lotka equation

\begin{equation}
    1 = N \int_\Omega \left(\int_{0}^{\infty} g(\tau,\eta) \exp( - \lambda \tau) d\tau \right) a(\eta) \Phi(d\eta)
\end{equation}
Moreover, when we introduce the function $w$ of $t$ by putting

\begin{equation}\label{s_w}
     s(t,x) = \exp( - a(x) w(t) )
\end{equation}
then insertion of (\ref{s_w}) into (\ref{main_equation}) yields for $w$ the RE

\begin{equation}\label{wRE}
    w(t) = N \int_{0}^{\infty} \int_\Omega g(\tau,\eta) \{ 1 - \exp( - a(\eta) w(t-\tau)) \} \Phi (d\eta) d\tau
\end{equation}
with corresponding final size equation

\begin{equation}
    w(\infty) = N \int_{\Omega} \left( \int_{[0,\infty)} g(\tau,\eta) d\tau \right) \{ 1 - \exp( - a(\eta) w(\infty)) \} \Phi(d\eta).
\end{equation}

The Renewal Equation \eqref{wRE} is a delay equation, i.e., a rule for extending a function of time towards the future on the basis of the, assumed to be, known past. Solving such equations numerically is not really difficult, but user-friendly software does not exist (but see Messina et al. \cite{Messina2022a,Messina2022b,Messina2023_a,Messina2023,Messina2023b} for promising developments). A recently developed methodology for numerical bifurcation analysis via systematic approximation by ODE is described in \cite{Scarabel2021}. For models that incorporate demographic turnover, see e.g. \cite{Avram2023,Breda2021}, one could use this approach to study the stability of the endemic equilibrium, but for studying an outbreak in a demographically closed population it seems a bit of overkill (although one could, of course, use the ODE to compute an approximation of the solution of the RE). 
In \cite{Diekmann2021}, it is explained how one can formulate a discrete time variant of (\ref{wRE}) with parameters that allow a clear interpretation (which is very helpful when it comes to identification on the basis of data). If $\Omega$ is a discrete set, with points numbered by an index $i$, and time is represented by the integers, corresponding to, say, days, this variant reads

\begin{equation}\label{discrete_omega}
 w(t+1) = \sum_{k=1}^{k_{max}}\sum_i g_{ki} [1 - \exp( - a_i w(t-k)) ] N_i.
 \end{equation}

Here $N_i := N \Phi(i)$, $a_i$ is a measure for the relative susceptibility of type $i$ and $g_{ki}$ describes the expected infectiousness of an individual of type $i$ at day $k$ after becoming infected. Often one is inclined to assume that $g_{ki}$ factors into the product of a function of $k$ and a function of $i$. Numerical implementation of (\ref{discrete_omega}) is straightforward. Also note that (\ref{discrete_omega}) is the separable mixing simplification of (\ref{main_equation_discrete}).

\section{Separable mixing: detailed analysis, with special attention for Gamma distributed traits and the Herd Immunity Threshold}\label{section_sep2}

When we assume that 
\begin{equation}
    g(\tau,\eta) = b(\tau) c(\eta)
\end{equation}
i.e.,
\begin{equation} \label{sep_asum}
    A(\tau,x,\xi)=a(x)b(\tau)c(\xi)
\end{equation}
we can rewrite (\ref{wRE}) in the form
\begin{equation}\label{wPsi}
    w(t) = \int^{\infty}_0 b(\tau) \Psi(w(t-\tau))d\tau
\end{equation}
with $\Psi: \mathbb{R}_+ \to \mathbb{R}_+$ defined by 
\begin{equation}\label{Psic}
    \Psi(w) = N \int_\Omega c(\eta)\Big(1- e^{-a(\eta)w}\Big) \Phi(d\eta).
\end{equation}
So note that the influence of the model ingredients $N$, $c$, $a$, $\Phi$ is captured by $\Psi$, a scalar function of one variable. In the present section we analyse (\ref{wPsi})-(\ref{Psic}). The aim is to deduce epidemiological relevant conclusions that reveal the impact of heterogeneity. As part of our analyses we shall make specific choices for $a$, $c$, $\Phi$. For instance, note that the choice $a \equiv 1$, $c \equiv 1$ captures the homogeneous situation with
\begin{equation}
    \Psi(w) = N\Big(1 - e^{-w}\Big)
\end{equation} In the companion paper \cite{Diekmann2022} we show how a specific choice of $b$ (involving a matrix exponential, cf. \cite{Diekmann2018}) leads to a large class of ODE systems that correspond to heterogeneous versions of familiar compartmental models. 

By linearization we find that 
\begin{equation} \label{R0_sep}
    R_0 = \Psi'(0) \int^{\infty}_0 b(\tau)d\tau
\end{equation}
with
\begin{equation} \label{Phi0}
    \Psi '(0) = N \int_\Omega c(\eta) a(\eta) \Phi(d\eta)
\end{equation}
and that the Euler-Lotka equation reads
\begin{equation}
    1 = \Psi '(0) \int^{\infty}_0 b(\tau) e^{-\lambda \tau} d\tau.
\end{equation}
Consider an ongoing outbreak. We want to quantify the effect of the reduction in the susceptible subpopulation brought about so far by the infection process itself. To do so, we perform a thought experiment: pretend that the reservoir of already infected individuals does not exist and that the current susceptible subpopulation is the entire population in which the pathogen is introduced. We denote by $R_{\text{eff}}$ the corresponding reproduction number. As long as $R_{\text{eff}} > 1$ the outbreak is still in its accelerating phase, but as soon as $R_{\text{eff}} < 1$ deceleration starts (yet many more victims are to be expected, simply since in reality this reservoir of already infected individuals may well be huge). So $R_{\text{eff}} = 1$ characterizes the turning point at which a gradual improvement slowly sets in. The fraction of the population that is still susceptible when $R_{\text{eff}}$ reaches the value 1, is called the Herd Immunity Threshold (HIT).

In order to determine this HIT for the present model, we require that linearization at the value $\bar{w}$ should yield the value $1$ for the corresponding reproduction number (so the root $0$ for the corresponding Euler-Lotka equation). This amounts to \[1 = \Psi '(\bar{w}) \int^{\infty}_0 b(\tau) d\tau\] which we rewrite in the form
\begin{equation} \label{PsiR0}
    \Psi '(\bar{w}) = \frac{\Psi '(0)}{R_0}.
\end{equation}
Since $\Psi ''$ is negative, there exists a unique solution $\bar{w} > 0$ if $R_0 > 1$. The HIT is the fraction of the population that is susceptible when $w$ reaches the value $\bar{w}$, and is given by 
\begin{equation}\label{s_tilde}
    \bar{s} = \int_\Omega e^{-a(x)\bar{w}} \Phi(dx).
\end{equation}

For $t \to \infty$, $w$ tends to $w(\infty)$ characterized by 
\begin{equation}\label{w_infty}
    w(\infty) = \Psi (w(\infty)) \int^{\infty}_0 b(\tau)d\tau = \frac{\Psi(w(\infty))}{\Psi '(0)} R_0
\end{equation}
and the fraction of the population that escapes is accordingly given by 
\begin{equation}
    \tilde{s} = \int_\Omega e^{-a(x) w(\infty)} \Phi(dx).
\end{equation}
Note that (\ref{w_infty}) implies that $\Psi'(w(\infty)) < \frac{\Psi'(0)}{R_0}$ (since $\Psi(y) > y \Psi'(y)$ for $y>0$) and hence that $w(\infty) > \bar{w}$.

When $\Omega$ is a subset of $\mathbb{R}$, it makes sense to organize the parameterization of susceptibility in such a way that 
\begin{equation}\label{a=x}
    a(x) = x.
\end{equation}
(Note that in the decomposition  (\ref{sep_asum}) we may accommodate multiplicative constants in any of the factors. Here we choose to keep the specification of $a$ and $c$ simple.). Concerning the impact of the trait on the infectiousness, we may now contrast (following work of G. Gomes and co-workers \cite{Gomes2022, Montalban2022}) the case
\begin{equation}\label{c=1}
    c(\xi) = 1
\end{equation}
where there is no impact at all, with the case
\begin{equation}\label{c=xi}
    c(\xi) = \xi
\end{equation}
where susceptibility and infectiousness are perfectly correlated. A straightforward computation based on (\ref{R0_sep})-(\ref{Phi0}), shows that in the second case $R_0$ is a factor \[\text{mean} + \frac{\text{variance}}{\text{mean}}\]
bigger than in the first case, where ``mean'' and ``variance'' refer to the distribution of the trait as described by $\Phi$. This is a well-known result, cf. Section 7.4.1 of \cite{Diekmann2013}, leading to the key insight that for sexual activity structured populations the variance contributes to $R_0$. In principle, we can in a similar manner compare the value $\bar{w}$ for the two cases, but since it is hard to say something in general about the solution of (\ref{PsiR0}), we first choose a particular $\Omega$ and $\Phi$ that will enable us to do explicit calculations.

Let \[\Omega = (0,\infty)\]
and let $\Phi$ be the Gamma Distribution with mean $1$ and variance $\frac{1}{p}$, meaning that $\Phi$ has density
\begin{equation}
    x \mapsto \frac{p^p}{\Gamma(p)} x^{p-1} e^{-px}.
\end{equation}
(That the mean equals one is not a loss of generality, it can be achieved by scaling of the variable $x$, and the scaling constant can, as noted before, be incorporated in the factor $b$ of $A$.) As noted in \cite{Neipel2020, Novozhilov2008, Tkachenko2021}, a key feature of the Gamma Distribution is that, when it is reduced according to (\ref{s_w}), with $a$ given by (\ref{a=x}), we obtain again a Gamma Distribution with exactly the same variance, but a reduced mean. Another important feature is that the Laplace Transform is given explicitly by
\begin{equation}
    \bar{\Phi}(\lambda) = \Big(\frac{\lambda}{p} + 1\Big)^{-p}
\end{equation}
which facilitates the computation of moments via derivatives of the Laplace Transform evaluated in $\lambda=0$. In this context, also note that when (\ref{a=x}) holds and $\Omega = (0, \infty)$ we have
\begin{equation}
    \int^{\infty}_0 e^{-x w(t)} \Phi(dx) = \bar{\Phi}(w(t)).
\end{equation}
Moreover, denoting (\ref{c=1}) as ``Case \rm I'' and (\ref{c=xi}) as ``Case \rm {II}'', we have 
\begin{equation}
    \Psi(w) = N 
        \begin{cases} 
            (1 - \bar{\Phi}(w)) & \text{Case \rm I} \\
            (-\bar{\Phi}'(0) + \bar{\Phi}'(w)) & \text{Case \rm {II.}}
        \end{cases}
\end{equation}
So if $\Phi$ is the Gamma Distribution we obtain
\begin{equation}
    \Psi(w) = 
        \begin{cases}
            1 - \Big(\frac{w}{p} + 1\Big)^{-p} & \text{Case \rm I} \\
            1 - \Big(\frac{w}{p} + 1\Big)^{-p-1} & \text{Case \rm {II}}
        \end{cases}
\end{equation}
and hence
\begin{equation}
    \bar{w} = 
        \begin{cases}
            p\Big(R_0^{\frac{1}{p+1}} - 1\Big) & \text{Case \rm I} \\
            p\Big(R_0^{\frac{1}{p+2}} - 1\Big) & \text{Case \rm {II}}
        \end{cases}
\end{equation}
and, according to (\ref{s_tilde}),
\begin{equation}
    \bar{s} = 
        \begin{cases}
            R_0^{-1 + \frac{1}{p+1}} & \text{Case \rm I} \\
            R_0^{-1 + \frac{1}{\frac{1}{2}p+1}} & \text{Case \rm {II}}
        \end{cases}
\end{equation}
These expressions should be contrasted with the HIT \[\bar{s} = R_0^{-1}\] for the homogeneous situation. We see that in both cases heterogeneity with large variance (i.e., small $p$) brings about a substantial reduction of the HIT. The reason is, of course, that among the individuals infected so far the highly susceptible individuals are over-represented. So here we see that letting the outbreak run its course is a far more efficient way of immunizing a population than at random vaccinating individuals, when there is substantial variation in susceptibility. 

This effect plays already a role in the early stage of the outbreak. For small $w$ we have \[\Psi '(w) = \Psi '(0) - N \int_\Omega c(\xi)a^2(\xi)\Phi(d\xi)w + \ldots\] and 
\begin{align*}
    s = \text{fraction susceptible} &= 1 - \int_\Omega a(\xi)\Phi(d\xi)w + \ldots \\ 
                                &= 1 - w + \ldots
\end{align*}
if we normalize $a$ by requiring \[\int_\Omega a(\xi) \Phi(d\xi) = 1.\] It follows that the reduction in reproduction number relates to the reduction in the fraction susceptible according to \[\frac{\Psi '(w)}{\Psi '(0)} = 1 - \theta (1-s) + o(1-s)\]
with 
\begin{align}
    \theta &:= \frac{\int_\Omega c(\xi) a^2(\xi)\Phi(d\xi)}{\int_\Omega c(\xi)a(\xi) \Phi(d\xi)} \nonumber \\ 
           & =
           \begin{cases}
                1 + \frac{1}{p} & \text{Case \rm I} \\
                1 + \frac{2}{p} & \text{Case \rm {II}.}
           \end{cases}
\end{align}
A. Tkachenko e.a. write in \cite{Tkachenko2021}: ``We named the coefficient $\theta$ the \underline{immunity factor} because it quantifies the effect that a gradual build up of population immunity has on the spread of an epidemic''.

In conclusion of this section we refer to our recent paper \cite{Bootsma2023} for an analysis of the effect of mask wearing on HIT and final size. Our study was inspired by \cite{Pastor-Satorras2022}. See \cite{Eikenberry2020, Ngonghala2020, Tian2023} for the wider context.

\section{The influence of population size on the contact process}\label{section_IPS}

   Transmission is superimposed on contact. In the present section we recall some observations made in \cite{Diekmann2013}, in particular in Section 1.3.3 and Chapter 12, concerning the influence of the population size $N$ on the contact intensity. In the next section we shall focus on the influence of population composition as described by $\Phi$.
   For the sake of exposition, consider the homogeneous SIR model. Whether we write
\begin{equation}\label{8.1}
    \frac{dS}{dt} = - \beta S I
\end{equation}
or
\begin{equation}\label{8.2}
    \frac{dS}{dt} = - \frac{\tilde{\beta}}{N} S I
\end{equation}
is irrelevant as long as $N$ is a given constant, since by
\begin{equation}\label{8.3}
     \tilde{\beta} = \beta N
\end{equation}
we can identify the two equations. But what if we want to compare the spread of the same disease in different geographical areas, for instance countries?
   First of all we should ascertain whether the variables are numbers or spatial densities, cf. Section 1.3.5 of \cite{Diekmann2013}. In stochastic models we deal with finite numbers. In deterministic models we  let numbers go to infinity and yet want to work with something finite, such as a spatial density (number/area) or a fraction (of the total). If we introduce
\begin{equation}\label{8.4}
    s = \frac{S}{N},  \quad  i =\frac{I}{N}
    \end{equation}
then \eqref{8.1}  and \eqref{8.2} transform into, respectively
\begin{equation}\label{8.5}
   \frac{ds}{dt} = - \beta N s i
   \end{equation}
\begin{equation}\label{8.6}
\frac{ds}{dt} = - \tilde{\beta} s i.
\end{equation}

If we want to allow for variable $N$, is it more appropriate to consider $\beta$ as constant (i.e., as independent of $N$) or should we consider $\tilde{\beta}$ as constant? 
If we think in terms of spatial densities and aerosol transmission, ‘contacts’ are reminiscent of colliding molecules in a gas and \eqref{8.5} with $\beta$ constant seems most appropriate. For STD’s (Sexually Transmitted Diseases), for mosquito transmission in a host-vector context and for sun-bathing seals,  \cite{Diekmann2013}, Section 1.3.3 and references given there, \eqref{8.6} with $\tilde{\beta}$ constant seems most appropriate. In the first case, the average number of contacts that an individual has per unit of time scales with population size $N$, so is homogeneous of degree 1. In the second case it is homogeneous of degree 0, so independent of $N$.
A somewhat intermediate situation is described in \cite{Heesterbeek1993, Thieme2000} and Section 12.2 of \cite{Diekmann2013}. It is based on the idea that contacts have a certain duration, so take time. As a consequence there is an upper bound for the number of contacts per unit of time. Yet at small values of $N$ the number of contacts per unit of time is proportional to $N$. 

The underlying contact sub-model assumes that individuals can be ‘single’ or ‘paired with another individual’. Let $N$ measure the total number of individuals, $X$ the number of singles and $2P$ the number of individuals that form a pair with another individual (in other words, there are $P$ pairs). Then
\begin{equation}\label{8.7}
    X + 2 P = N.
\end{equation}

Concerning the dynamics, assume that the rate at which a single becomes part of a pair depends on the availability of potential partners in the sense that it is proportional to $X$, with constant $r$. Assume that a pair spontaneously dissolves at rate $s$ (so has an exponentially distributed life time with mean duration $1/s$). Then
\begin{equation}\label{8.8}
\begin{aligned}
            \frac{dX}{dt} &= - r X^2  + 2 s P\cr
                         \frac{dP}{dt}& = \frac{1}{2} r X^2 - s P.
\end{aligned}
\end{equation}
By making use of \eqref{8.7} we reduce to a scalar differential equation and next it is easy to show that the solution converges to the steady state $\bar{P} = \frac{1}{2} \theta \bar{X}^2$ with
\begin{equation}\label{8.9}
  \theta := \frac{r}{s}
  \end{equation} 
and
\begin{equation}\label{8.10}
\bar{X} := \frac{1}{2\theta}\left( \sqrt{ 1 + 4 \theta N } - 1 \right).
\end{equation}

The probability that a randomly chosen individual participates in a pair equals
\begin{equation}\label{8.11}
   C(N) := \frac{2 \bar{P}}{N} = \frac{ 1 + 2 \theta N - \sqrt{ 1 + 4\theta N }}{ 2 \theta N}
\end{equation}
in steady state. The idea is now to assume that 

\begin{itemize}
\item the processes of pair formation and dissolution occur on a much faster time scale than disease transmission

\item disease status has no influence at all on pair formation and separation

\item transmission only occurs within pairs.
\end{itemize}

So at any moment in time a susceptible belongs to a pair with probability $C(N)$ and, if so, its partner is infectious with probability $I/N$ and, if so, there is a certain probability per unit of time, say $\beta$, that transmission occurs. This leads to
\begin{equation}\label{8.12}
\begin{aligned}
\frac{ dS}{dt} &= - \beta S C(N) \frac{I}{N}\cr 
    \frac{ dI}{dt} &=   \beta S C(N)\frac{ I}{N}  - \alpha I
    \end{aligned} 
\end{equation}
in the familiar SIR setting and to
\begin{equation}\label{8.13}
     \frac{dS}{dt} =  S C(N) \frac{1}{N} \int_{0}^{\infty} \beta(\tau) S (\cdot-\tau) d\tau
     \end{equation}
in the general Kermack-McKendrick framework. Note that $C(N) \sim \theta N$ for small $N$ and that 
$C(N) \to 1$ for $N \to \infty$. For a more detailed justification of \eqref{8.12} we refer to \cite{Heesterbeek1993} and Section 12.2 of \cite{Diekmann2013}. As far as we know, the ‘derivation’ of \eqref{8.12} is so far purely formal.
   One of the things we shall consider in the next section, is a multitype version of the pair formation sub-model described above. This too is based on \cite{Heesterbeek1993}.
   We refer to \cite{Toorians2021} for biologically motivated modelling considerations.

\section{The influence of population composition on the contact process}\label{section_exampleIPS}

   The general model ingredient $A(\tau,x,\xi)$ incorporates information about how expected intrinsic infectiousness depends on $\xi$ and $\tau$, how intrinsic susceptibility depends on $x$, but also on how the probability per unit of time for an individual with trait $x$ to have contact with an individual  of type $\xi$ depends on the combination of $x$ and $\xi$. The aim of this section is to describe a kind of catalogue of possibilities for this last aspect, as first presented in the unpublished manuscript \cite{Chan2016}, which is based on \cite{Chan2013}. The work reported in this manuscript originated from a very concrete question: 
how do we extrapolate information about the impact of the H1N1-2009 Influenza outbreak in Hong Kong to a future outbreak of a similar new influenza strain, taking into account the predictable demographic changes, in particular the ageing of the population, i.e., the relative increase of the older part of the population? The manuscript is available upon request to K.M.D. Chan. 

Recently, the question on how the contact structure depends on the size of different age groups popped up in another context, viz., the effectiveness of measures to prevent the spread of SARS-CoV-2: When only vaccinated individuals are allowed access to certain premises like theaters and restaurants, and the vaccination coverage is inhomogeneous among age groups, the age distribution of individuals at the premises will change. Consequently, the contact intensities between age groups will change as well and the question is how to model these changes in the absence of direct observations of the changes in the contact process at those premises \cite{Bootsma2022}. 

   So here we want to extend the considerations of the foregoing section to the multi-type situation. 
   (For a recent overview focusing on compartmental models see \cite{Hill2023}.)
   
   Recall that we use the words ‘type’ and ‘trait’ interchangeably, but have a tendency to use the former when there are finitely many types and the latter when the trait-space might be, or contain, a continuum.  Here we do indeed restrict to finitely many types, partly for technical reasons, partly to avoid modelling difficulties related to how accurately individuals can distinguish one type of individual from another (what is the difference between an individual on its 70th birthday and an individual that had its 70th birthday a fortnight ago?). In particular we specialize \eqref{main_equation} to
\begin{equation}\label{9.1}
   s_i(t) = \exp \left( - \int_{0}^{\infty} \sum_{j=1}^{m} A_{ij}(\tau) [1 - s_j(t-\tau) ] N_j d\tau \right)
\end{equation}
where $x$ and $\xi$ are replaced by integers numbering the $m$ points in the support of $\Phi$ and  
$N_i := N \Phi(i^{\rm th} \text{ point of support})$. The corresponding final size equation reads
\begin{equation}\label{9.2}
   s_i(\infty) = \exp \left( - \sum_{j=1}^{m}  \int_{0}^{\infty} A_{ij}(\tau) d\tau [1 - s_j(\infty)] N_j \right).
\end{equation}
Next, assume that
\begin{equation}\label{9.3}
    A_{ij}(\tau) = \frac{1}{ N_i}  k_{ij}  b_{ij}(\tau)
    \end{equation}
where
\begin{equation}\label{9.4}
\begin{aligned}
    k_{ij} :=&{\rm expected~ number~ of~ contacts~ per~ unit~ of~ time~ that~}\cr
     &{\rm a~ type-}j~{\rm individual~ has~  with~ type-}i ~{\rm individuals}
    \end{aligned}
    \end{equation}
(so the factor $1/N_i$ serves to translate to a ‘per $i$-type individual’ probability per unit of time) and $b_{ij}(\tau)$ specifies the product of the infectiousness of an $(j, \tau)$ individual and the susceptibility of an $i$-type individual (it is tempting to put $b_{ij}(\tau) = a_i \tilde{b}_j(\tau)$; but at this point we would like to include situations in which, for instance, an individual that is aware of its Covid-19 infection status might choose to care for its children, while avoiding to meet its parents; also note that the factors in a product are never unique, so we are free to put such a reduction of contact intensity into $b$, even though the description in words might suggest to put it into $k$).
   Clearly the consistency relation
\begin{equation}\label{9.5} 
  k_{ij} N_j  =  k_{ji} N_i
\end{equation}
should hold. Let $K$ denote the matrix with elements $k_{ij}$. We allow $K$ to depend on the vector
\begin{equation}\label{9.6}
    {\bf N} = (N_1, \ldots, N_m).
\end{equation}

We call a specification of how $K$ depends on ${\bf N}$ a CONTACT PATTERN.

In general, a contact pattern $K$ can be represented by a function  $K: \mathbb{R}^m\to\mathbb{R}^{\frac{m(m+1)}{2}}$, as relation (\ref{9.5}) implies that the upper-triangular part of the matrix $K$ specifies the full matrix $K$. Moreover, one expects a contact pattern to be continuous, such that small changes in  {\bf N} lead to small changes in $K$.

   When $K$ is homogeneous of degree 1 and, more precisely, when
\begin{equation}\label{9.7}
  k_{ij}  =  q_{ij} N_{i} 
  \end{equation}
with $q_{ij} = q_{ji}$ and $q_{ij}$ constant, i.e., independent of ${\bf N}$ for all $1\leq i,j\leq m$,
we call the contact pattern DENSITY DEPENDENT. When, on the other hand, $K$ is homogeneous of degree 0, i.e., when for all $c > 0$ we have
\begin{equation}\label{9.8}
   K(c{\bf N}) = K({\bf N})
   \end{equation}
we call the contact pattern FREQUENCY DEPENDENT.
Note that in the multi-type frequency dependent case, knowledge of $K$ for a certain ${\bf N}$, does not fully specify the contact pattern. Only when all group sizes change with the same factor, the contact matrix $K$ will remain identical. 

In the special case
\begin{equation}\label{9.9}
  k_{ij}  =  \frac{c_i c_j N_i}{\sum_{\ell=1}^{m} c_{\ell} N_{\ell}}
\end{equation}
(with $c_i$ constant) we speak about PROPORTIONATE MIXING, while if for all pairs $(i,j)$ with $i\neq j$ $k_{ij}$ depends on $N_i$ and $N_j$ but NOT on $N_\ell$ for $\ell \neq i, j$ we speak about a BILATERAL pattern. In a bilateral pattern each pair of types of individuals `decides' on how changes in the group size of the two types affect their contact intensities. This `decision process' may differ for each pair of types. Hence, there exist many bilateral patterns. There are two mathematically convenient ways to `decide' on changes in the contact intensities between two different groups $i\neq j$.

One is the POWER LAW\begin{equation}\label{9.10}
   k_{ij} = q_{ij} \left(\frac{N_i}{N_j}\right)^d,
   \end{equation}
which is homogeneous of degree 0, but satisfies the consistency condition \eqref{9.5} only when $d=1/2$.
The other is when there is a DOMINATING type which keeps its contact rate constant, while the other type has to adapt its contact rate to satisfy the consistency condition \eqref{9.5}, i.e., when type $j$ dominates type $i$ we have
\begin{equation}\label{DOMINATING}
\begin{array}{c}
   k_{ij} = q_{ij} \\
   k_{ji} = q_{ij}\frac{N_j}{N_i}.
   \end{array}
   \end{equation}

Note that neither the power-law nor the 
dominating pattern does nail down  the within-group contact intensities $k_{jj}$, even in the case of only two types. 
One may, of course, assume that the within-group contact intensities $k_{jj}$ do not depend on ${\bf N}$, such that equations (\ref{9.10}) and (\ref{DOMINATING}) are valid as well for $i=j$. But other assumptions make sense too, e.g., that the total contact intensity of a type-$j$-individual, $\sum_{i=1}^mk_{ij}$, does not depend on ${\bf N}$.  

 
In general, contact patterns are not bilateral.
In practical applications, the contact matrix $K$ is often estimated for a given $\mathbf{N}$, and instead of attempting to determine the full contact pattern, one would like to know the contact matrix $\tilde{K}$ for a specific $\mathbf{\tilde{N}}\neq \mathbf{N}$.
As a contact matrix has $\frac{m(m+1)}{2}$ degrees of freedom, it is a challenge to avoid arbitrariness.

One way to deal with the contact pattern problem is to define an explicit model for pair formation, see Hadeler \cite{Hadeler2012}.

   In the Heesterbeek-Metz approach \cite{Heesterbeek1993}, it is assumed that pair formation occurs according to the law of mass action and that pairs disband at a certain rate (or, in other words, pairs exist for an exponentially distributed amount of time). The short time scale dynamic equations for pair formation and dissolution are
\begin{equation}\label{9.11}
\begin{aligned}
            \frac{dX_i}{dt}  &= - \left(\sum_{j=1}^{m} r_{ij} X_j \right) X_i  + 2 \sum_{j=1}^m s_{ij} P_{ij} \cr
            \frac{dP_{ij}}{dt} &=  \frac{1}{2} r_{ij} X_i X_j  -  s_{ij} P_{ij}
            \end{aligned}
            \end{equation}

where we use a similar notation as in \eqref{8.8}, but now with indices indicating the type or the two types forming a pair.

Here we assume that pairs $P_{ij}$ are symmetric entities and that accordingly
\begin{equation}\label{9.12}
     r_{ij} = r_{ji}   ~{\rm and}~   s_{ij}  =  s_{ji}  ~{\rm and}~  P_{ij} = P_{ji}
     \end{equation}
in the sense that the first two identities are requirements for the ingredients $r$ and $s$ while the third is, subsequently, a consequence of \eqref{9.11}. The factor $1/2$ in front of $r_{ij}$ serves to be able to treat $i=j$ and $i \neq j$ in an identical way. The consequence is that the number of pairs consisting of an $i$-type individual and a $ j$-type individual equals $2 P_{ij}$ if $i \neq j$ (or, if you prefer, equals $P_{ij} + P_{ji}$). Accordingly we have
\begin{equation}\label{9.13}
   X_i + 2 \sum_{j=1}^{m} P_{ij} = N_i
   \end{equation}
It can be shown that convergence to a (unique) steady state is guaranteed, see \cite{Heesterbeek1993} and references in there.

To facilitate the notation, we now omit bars when we consider variables in steady state. In steady state we have to have
\begin{equation}\label{9.14}
   P_{ij} = \frac{1}{2} \theta_{ij} X_i X_j
   \end{equation}
with
\begin{equation}\label{9.15}
    \theta_{ij} := \frac{r_{ij}}{s_{ij}}
    \end{equation}

So we can rewrite \eqref{9.13} as
\begin{equation}\label{9.16}
   X_i + \sum_{j=1}^{m} \theta_{ij} X_i X_j  =  N_i.
   \end{equation}
(Incidentally, note that if we divide this identity by $N_i$, the first term at the left hand side is the probability that a type-$i$ individual is single, while the term with index $j$ in the sum gives the probability that a type-$i$ individual is paired to a type-$j$ individual. This observation shall be used below.)

We would like to relate the steady state to a known contact matrix $K$.
As the ${ij}^{\mbox{th}}$ element of the contact matrix $K$ represents the number of contacts a type $j$ individual has, per unit of time, with type $i$ individuals,
we want to find values of $r_{ij}$ and $s_{ij}$ such that 
\begin{equation}\label{eqK}
k_{ij}=2 P_{ij}s_{ij},
\end{equation} since $2P_{ij}s_{ij}$ is the number of $ij$-pairs that dissolve  per time unit, which, in the equilibrium, equals the number of new contacts per time unit. 
 By the symmetry postulated in equation (\ref{9.12}), both $(r_{ij})_{1\leq i,j\leq m}$ and $(s_{ij})_{1\leq i,j\leq m}$ have $m(m+1)/2$ degrees of freedom. So in total there are $m(m+1)$
degrees of freedom for $(r_{ij})_{1\leq i,j\leq m}$ and $(s_{ij})_{1\leq i,j\leq m}$ combined.
As a contact matrix $K$ is determined by its upper triangular part, \eqref{eqK} puts $m(m+1)/2$ restrictions on $s_{ij}$ and $r_{ij}$.
This means that for a general contact matrix $K$, there is a $m(m+1)/2$-dimensional set of $(r_{ij})_{1\leq i,j\leq m}$ and $(s_{ij})_{1\leq i,j\leq m}$ such that the corresponding equilibrium contact process leads to the contact matrix $K$. Hence, additional restrictions on $(r_{ij})_{1\leq i,j\leq m}$ and $(s_{ij})_{1\leq i,j\leq m}$ are needed to uniquely determine how the $r$, $s$ coefficients, and thus how the elements of the contact matrix, change if $\bf{N}$ changes. Here we consider two options.

  In the first option, we simplify the situation. We assume that the two individuals involved have an independent influence on both pair formation and separation, in the sense that both $r_{ij}$ and $s_{ij}$ are the product of an $i$-dependent factor and a $j$-dependent factor, leading to
\begin{equation}\label{9.17}
   \theta_{ij} = \rho_i \rho_j.
   \end{equation}

If we insert
\begin{equation}\label{9.18}
   X_i = \zeta_i N_i
   \end{equation} 
into \eqref{9.16}, use \eqref{9.17} and divide the identity by $N_i$ we obtain
\begin{equation}\label{9.19}
   \zeta_i +\sum_{j=1}^{m} \rho_i \rho_j \zeta_i \zeta_j N_j  = 1
   \end{equation}
which we rearrange into
\begin{equation}\label{9.20}
  \sum_{j=1}^{m} \rho_j \zeta_j N_j = \frac{ 1-\zeta_i}{\rho_i \zeta_i}.
  \end{equation}

As the left hand side does not depend on i, the same must hold for the right hand side. Let us call the common value Q. Then 
\begin{equation}\label{9.21}
   \zeta_i = \frac{1}{ 1 + \rho_i Q}
   \end{equation}
while $Q$ itself should satisfy the equation
\begin{equation}\label{9.22}
  \sum_{j=1}^{m} \frac{\rho_j N_j}{ 1 + \rho_j Q} = Q
  \end{equation}
which has, as a simple graphical argument shows, a unique positive solution. Thus we have constructively defined the solution of the steady state problem for any given $(N_1, \cdots, N_m)$.
   Now how do we use these steady state expressions in the context of an epidemic model? The probability that an $i$-type individual is paired to $j$-type individual is given by
\begin{equation}\label{9.23}
   C_{ij} = \frac{2 P_{ij}}{N_i} = \frac{\rho_i \rho_j X_i X_j}{ N_i}.
   \end{equation}

Under the assumptions formulated below \eqref{8.11} and focussing on an SIR or SEIR setting we should put 
\begin{equation}\label{9.24}
\frac{dS_i}{dt} = - S_i \sum_{j=1}^{m} \beta_{ij} C_{ij} \frac{I_j}{N_j}
\end{equation}
while the analogue of \eqref{8.13} reads
\begin{equation}\label{9.25}
\frac{dS_i}{dt} = - S_i \sum_{j=1}^{m} C_{ij} \frac{1}{N_j} \int_{0}^{\infty} \beta_{ij}(\tau) \frac{dS_j}{dt} (\cdot-\tau) d\tau.
\end{equation}

In \cite{Heesterbeek1993} and the references given there, one finds far more information about how to prove the existence, uniqueness and global stability of the steady state of \eqref{9.11} even when the simplifying assumption \eqref{9.17} does not hold.

In the second option, we do require that the total number of contacts per time unit of an individual does not depend on $\bf{N}$. This idea is inspired by the observation that during the SARS-CoV-2 outbreak, people visiting certain premises, did not change their behaviour substantially \cite{slimopen}. In this context, the type specifies the age group to which an individual belongs.
 We denote the original, known, contact matrix by $K$ and the original population size by ${\bf N}$. By putting tildes on top of these parameters we denote the new situation for which we want to determine the contact matrix $\tilde{K}$. Let $\tilde{N}_j:=\rho_jN_j$, i.e., the lower $\rho_j$, the lower the attendance of type-$j$-individuals in the new situation.

We define an ordering  $\sigma$ such that $\rho_{\sigma(1)}\leq \rho_{\sigma(2)}\leq \ldots\leq \rho_{\sigma(m)}$, so type-$\sigma(1)$ has the highest relative reduction in participation. 

Originally, the average total number of contacts per unit of time of an individual of age group $\sigma(1)$ equalled $\sum_{i=1}^m k_{i\sigma(1)}$. We assume that $\tilde{k}_{j\sigma(1)}$, the number of contacts of a  type-$\sigma(1)$-individual with type-$j$ individuals per unit of time after the intervention, equals:

\begin{equation}\tilde{k}_{j\sigma(1)}=
\frac{\rho_jk_{j\sigma(1)}}{\sum_{i=1}^m \rho_i k_{i\sigma(1)}}\sum_{i=1}^m k_{i\sigma(1)}\label{8.28}\end{equation}
In this way, the total number of contacts of a type-$\sigma(1)$-individual remains $\sum_{i=1}^mk_{i\sigma(1)}$ and the intensity of contacts with age group $j$ are proportional to $\rho_j$ and $k_{j\sigma(1)}$.

To keep contacts symmetric, we need that $\tilde{k}_{\sigma(1)j}$, the number of contacts per unit of time of a type-$j$-individual with type-$\sigma(1)$ individuals, equals:

\begin{equation}
\tilde{k}_{\sigma(1)j}=\frac{k_{\sigma(1)j}}{k_{j\sigma(1)}}
\frac{\rho_{\sigma(1)}}{\rho_j}\tilde{k}_{j\sigma(1)}
%
%
.\end{equation}

We have constructed the contact of and with  group $\sigma(1)$ which has the highest reduction in attendance. Next, we will define the contact rate of and with group $\sigma(2)$ individuals. However, $\tilde{k}_{\sigma(1)\sigma(2)}$ and $\tilde{k}_{\sigma(2)\sigma(1)}$ are already defined. As the total contact rate of each type of individual remains constant, the total contact rate of type-$\sigma(2)$-individual with types other than $\sigma(1)$ needs to be:
\begin{equation}
R_{\sigma(2)}:=\left(\sum\limits_{i=1}^mk_{i\sigma(2)}\right)-\tilde{k}_{\sigma(1)\sigma(2)}
 \end{equation}   
We distribute the remaining contact rate $R_{\sigma(2)}$ of type $\sigma(2)$-individuals 
over all types $j\neq \sigma(1)$ in a similar way as we did in \eqref{8.28}, i.e., 
\begin{equation}
   \tilde{k}_{j\sigma(2)}:=
   \frac{\rho_jk_{j\sigma(2)}}{\sum\limits_{i=2}^{m}\rho_{\sigma(i)}k_{\sigma(i)\sigma(2)}}R_{\sigma(2)}
\end{equation}
To keep contacts symmetric, we need that $\tilde{k}_{\sigma(2)j}$, the number of contacts per unit of time of a type-$j\neq\sigma(1)$-individual with type-$\sigma(2)$ individuals, equals:

\begin{equation}
\tilde{k}_{\sigma(2)j}=\frac{k_{\sigma(2)j}}{k_{j\sigma(2)}}
\frac{\rho_{\sigma(2)}}{\rho_j}\tilde{k}_{j\sigma(2)}
%
%
.\end{equation}

 We now recursively define all contact rates this way. More precisely,
suppose we know the new contact rate of the $n-1$ types with the highest reduction in attendance, i.e., we know $\tilde{k}_{\sigma(i)j}$ and $\tilde{k}_{j \sigma(i)}$ for $1\leq i\leq n-1<m$ and $1\leq j\leq m$. 
The total contact rate of type-$\sigma(n)$-individuals with all type $\sigma(j)$-individuals with $n-1<j\leq m$ equals:
\begin{equation}
R_{\sigma(n)}:=\left(\sum\limits_{i=1}^mk_{i\sigma(n)}\right)-\sum\limits_{i=1}^{n-1}\tilde{k}_{\sigma(i)\sigma(n)}
 \end{equation}

We define the contact rate of a type $n$-individual with type  $\sigma(j)$ indiviudal  , with $n\leq j\leq m$, as;

\begin{equation}
\tilde{k}_{\sigma(j)\sigma(n)}:=\frac{k_{\sigma(j)\sigma(n)}\rho_{\sigma(j)}}
     {
      \sum_{i=n}^m \rho_{\sigma(i)} k_{\sigma(i)\sigma(n)}
     }R_{\sigma(n)}
\end{equation}
i.e., contacts with types $\sigma(1),\ldots,\sigma(n-1)$ are already defined, and the contacts with the remaining 
types are such that they are proportional to both the original contact rate with that type and the reduction factor of that type. The contacts  are scaled such that the total number of contacts of individuals of age-group $\sigma(n)$ is the same as in the original contact matrix. 
By symmetry of the contacts we have that:

\begin{equation}
\tilde{k}_{\sigma(n)\sigma(j)}=
\frac{k_{\sigma(n)\sigma(j)}}{k_{\sigma(j)\sigma(n)}}
\frac{\rho_{\sigma(n)}}{\rho_{\sigma(j)}}\tilde{k}_{\sigma(j)\sigma(n)}.
%
%
\end{equation}

Thus we recursively construct the contact rates between all types. This provides us with a new contact matrix ${\bf \tilde{K}}$ which still satisfies the symmetry-condition (\ref{9.5}) and keeps the total contact rate of individuals fixed. This iterative procedure was used in a report for the Dutch government to assess the effectiveness of Corona ticket measures \cite{Bootsma2022}.

Note that these two options do \underline{not} at all exhaust 
all possibilities!

\section{Numerical illustration of some subtle issues}
\subsection{Peaks} \label{subsect_peaks}

   Even without heterogeneity, one needs numerical methods to determine the size and timing of a peak in the incidence and to investigate how these quantities depend on the model ingredients, see e.g. \cite{Diekmann2021}.  With heterogeneity, the need for numerical methods intensifies. And, more importantly, new phenomena arise.
   First, before we can even look for peaks, we need to agree upon the quantity that we graph as a function of time. Here we choose two measures of the incidence, i.e., the number of cases per time unit that become infected and the number of cases per time unit that become infectious. Depending on whether individuals in the latent period (before they become infectious) are symptomatic or not, the first or the latter may be closest to surveillance data in a situation where one may be unaware of relevant heterogeneity.
   Now imagine, as a thought experiment, two well-mixed subpopulations characterized by a major difference of the latent period (reflecting, for instance, a genetically determined difference in immune physiology).
   \begin{figure}[htb]
    \centering
     \begin{subfigure}[b]{0.325\textwidth}
         \centering
         \includegraphics[width=\textwidth]
         {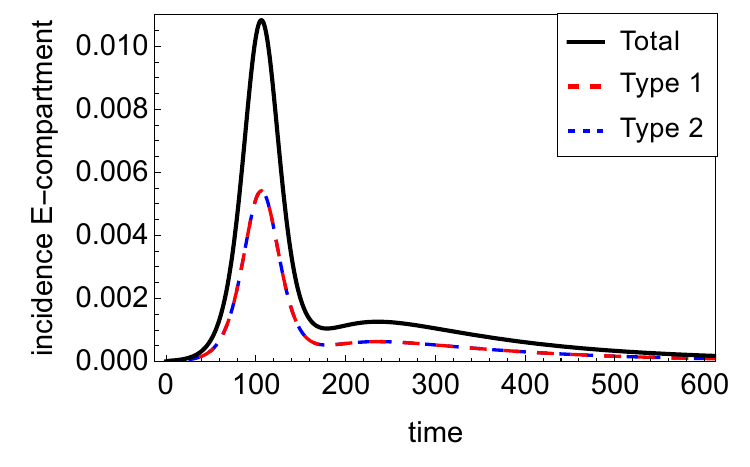}
         \caption{ Incidence of infections}
         \label{exheterolatentinc}
     \end{subfigure}
     \hfill
     \begin{subfigure}[b]{0.325\textwidth}
         \centering
         \includegraphics[width=\textwidth]
         {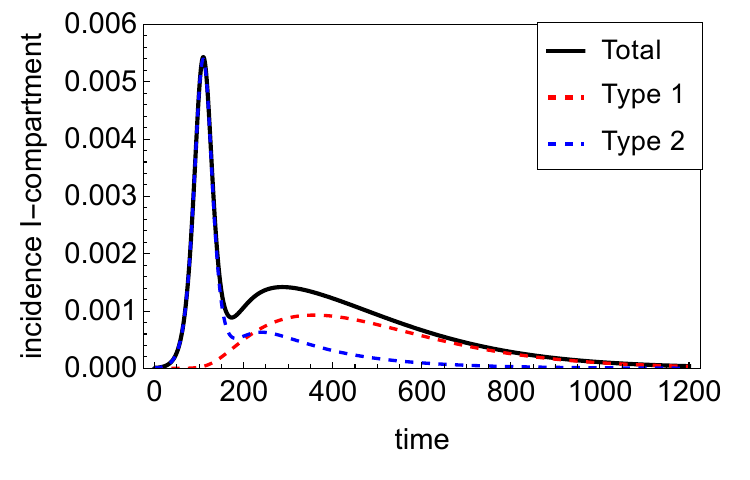}
         \caption{Incidence of infectious cases}  
          \label{exheterolatentincI}
     \end{subfigure}
     \hfill
     \begin{subfigure}[b]{0.325\textwidth}
         \centering
         \includegraphics[width=\textwidth]
         {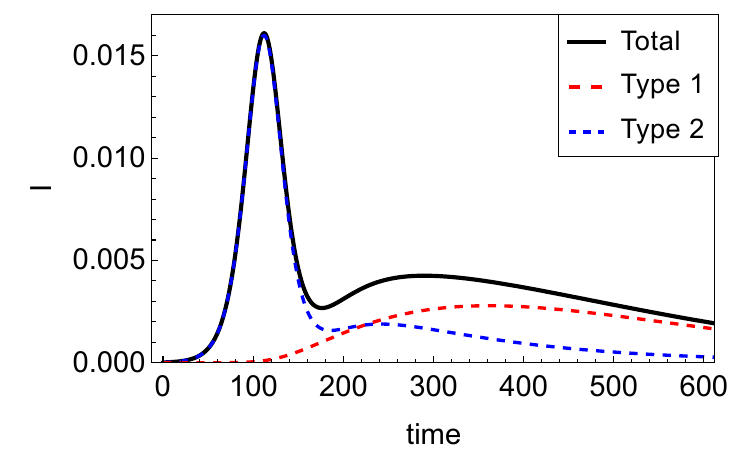}
         \caption{Infectious cases}   
          \label{exheterolatentI}
     \end{subfigure}
   \caption{{\bf Incidence with heterogeneity in latent period.}
We model two groups of equal size and random mixing of both groups. The groups only differ in the duration of the latent period. Both are  Gamma-distributed with shape-parameter 3 while the scale parameter is 1 for group 1 and 0.01 for group 2. The infectious period is exponentially distributed with mean 3 for both groups. $R_0$ equals 3.    (a) Incidence of new infections. (b) Incidence of new infectious individuals. (c) I-compartment as fraction of the total population}\label{examplehetero}
   \end{figure}
   Figure \ref{exheterolatentinc} shows TWO peaks in the graph of total incidence of new infections as a function of time. If we graph the incidence of new infections in the two subpopulations separately, the two graph are identical and the peaks occur at the same points in time for the two subpopulations. But if we graph the incidence of infectious cases Figure \ref{exheterolatentincI} or the contribution of the two subpopulations to the force of infection, Figure \ref{exheterolatentI}, i.e., the number of individuals that are infectious, a different picture emerges: these contributions are out of phase. It is this difference in phase that causes the double peak.
   
   In the example above, contact is uniform but physiology is not. Let's now reverse this and consider neighbouring countries inhabited by identical individuals, but with weak coupling (in the sense of contacts). More precisely, assume that the two subpopulations are of equal size and have equal within-subpopulation contact rate. We introduce the between-subpopulation contact rate as a small parameter $\epsilon$. For $\epsilon=0$ the two subpopulations are uncoupled, each shows a single peak, but the timing of the peak depends, of course, on the timing of the introduction of the pathogen. Now make $\epsilon$ a tiny bit positive. Technically we obtain irreducibility: when we introduce the pathogen in one of the two subpopulations, ultimately both will be hit and, in terms of final size, in equal measure. Yet the outbreaks are bound to be out of phase. 
   \begin{figure}[htb]
    \centering
     \begin{subfigure}[b]{0.325\textwidth}
         \centering
         \includegraphics[width=\textwidth]
         {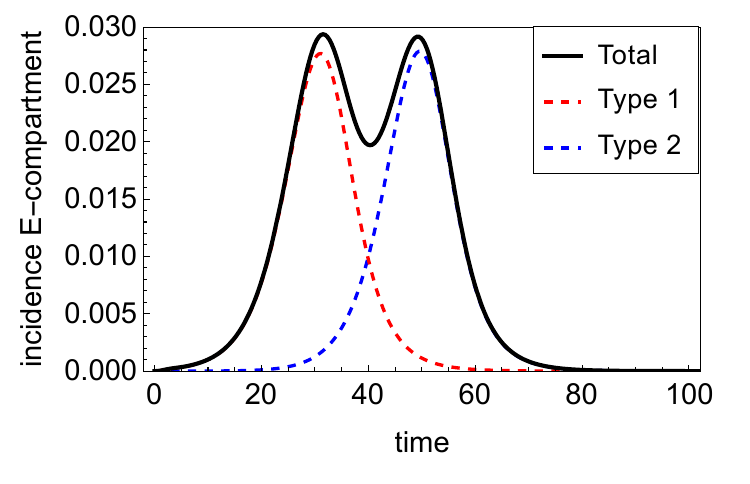}
         \caption{ Incidence of infections}
         \label{incidenceEcoupling}
     \end{subfigure}
     \hfill
     \begin{subfigure}[b]{0.325\textwidth}
         \centering
         \includegraphics[width=\textwidth]
         {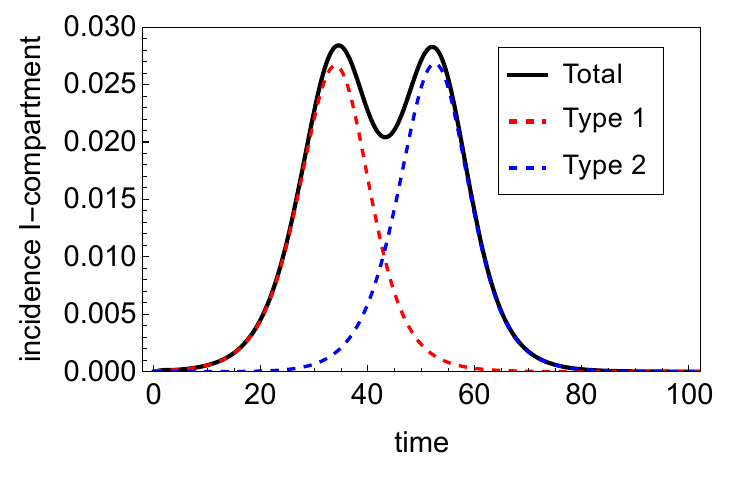}
         \caption{Incidence of infectious cases}  
          \label{incidenceIcoupling}
     \end{subfigure}
     \hfill
     \begin{subfigure}[b]{0.325\textwidth}
         \centering
         \includegraphics[width=\textwidth]
         {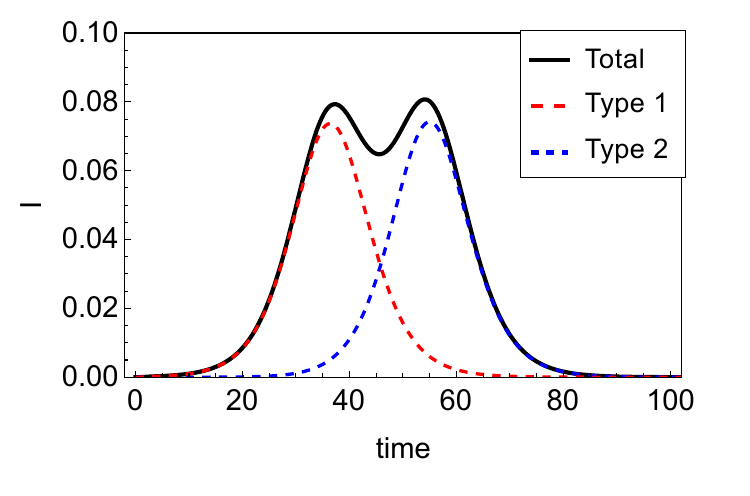}
         \caption{Infectious cases}   
          \label{exheterocouplingI}
     \end{subfigure}
   \caption{{\bf Incidence with heterogeneity due to weak coupling.}
We model two groups of equal size and with weak coupling between the two groups: The within group transmission parameter is 100 times higher than the between-group transmission parameter. The groups have the same parameters (latent period  Gamma-distributed with shape-parameter 3 and scale parameter  1; infectious period exponentially distributed with mean 3 for both groups), $R_0$ equals 3. We introduce the disease one of the groups.  
 (a) Incidence of new infections. (b) Incidence of new infectious individuals. (c) I-compartment as fraction of the total population}\label{exampletiming}
   \end{figure}

   And indeed, in Figure \ref{exampletiming} we can observe TWO peaks in the graphs of the total incidence (both for new infections and for new infectious cases) as a function of time. This time (and in contrast with the first example) each peak relates to the incidence in a subpopulation. Thus, this example illustrates that the question ``when do we speak about one population and when about two ?" has a subtle quantitative aspect, in addition to the more obvious public health administration aspect. At the end of Section 7.3 (page 175) in \cite{Diekmann2013} this is described as `quantitative aspects of irreducibility': it may happen that nonlinear dynamics sets in BEFORE the distribution takes the shape predicted by the stable distribution of the linearized model (this happened for instance with HIV; the gay community suffered considerably before the disease was observable in the heterosexual community, simply since the connection, though non-zero, is so very weak). Or, in other words, the basic reproduction number $R_0$ of the coupled population definitely has critical value one, and yet the linearization might give a prediction/suggestion that isn’t necessarily right.
   In conclusion:
whether a peak occurs also depends on what you plot and, 
even without control measures or arrival of new variants, multiple peaks may be found

\subsection{How well defined is the HIT?}\label{subsect_well_hit}

   As shown in \cite{Inaba2017} and \cite{Franco2022}, the $\infty$-dimensional (i.e., Krein-Rutman) version of Perron-Frobenius theory yields that within the positive cone there is, for the linearized problem, modulo translation only one solution growing away from the disease free steady state. Unstable manifold theory, see \cite{Diekmann2008,Diekmann2011}, next extends this uniqueness modulo translation to the nonlinear setting. Implicitly we have used this idea when writing \eqref{main_equation} and paying little to no attention to the initial condition. Even though we do not know a published proof, we believe that one can define the HIT unambiguously in terms of the positive unstable manifold of the disease free steady state. But, armed with the insights obtained in the preceding subsection, we might wonder how relevant the HIT thus defined is when coupling is only weak?
   To find out, imagine a small subpopulation with a high within-contact rate, very weakly coupled to a large subpopulation that has a within-contact rate large enough to have the within-$R_0$ bigger than 1. 
If we introduce the pathogen in the small subpopulation, the HIT will be reached more or less when the susceptible fraction of the large subpopulation reaches $1/\mbox{within-}R_0$. But if we introduce the pathogen in the large subpopulation, then upon reaching $1/\mbox{within-}R_0$ the small subpopulation will still have its susceptible fraction ABOVE its own 1/within-$R_0$. While we wait for the susceptible fraction of the small subpopulation to reach this critical level, an overshoot happens in the large subpopulation. So we expect that in this second scenario the overall susceptible fraction upon reaching the overall HIT is smaller than in the first scenario. Figure 
\ref{hitstart} illustrates that this can indeed happen.
In conclusion:
for the same model, the HIT may differ substantially depending on the precise details of the introduction, in particular the subpopulation in which the small introduction occurs (we reiterate: the distinction between one population and several populations is not as clear cut as the mathematical definition of irreducibility seems to suggest at first).

  \begin{figure}[htb]
    \centering
     \begin{subfigure}[b]{0.325\textwidth}
         \centering
         \includegraphics[width=\textwidth]
         {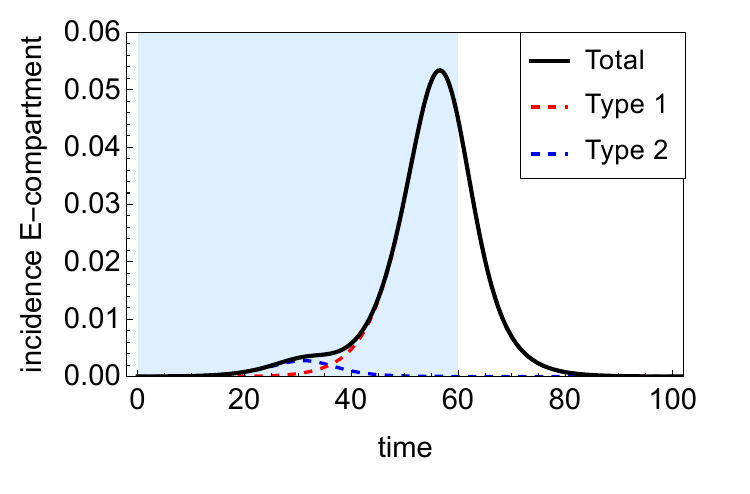}
         \caption{ Incidence of infections}
         \label{incidenceEhitsmall}
     \end{subfigure}
     \hfill
     \begin{subfigure}[b]{0.325\textwidth}
         \centering
         \includegraphics[width=\textwidth]
         {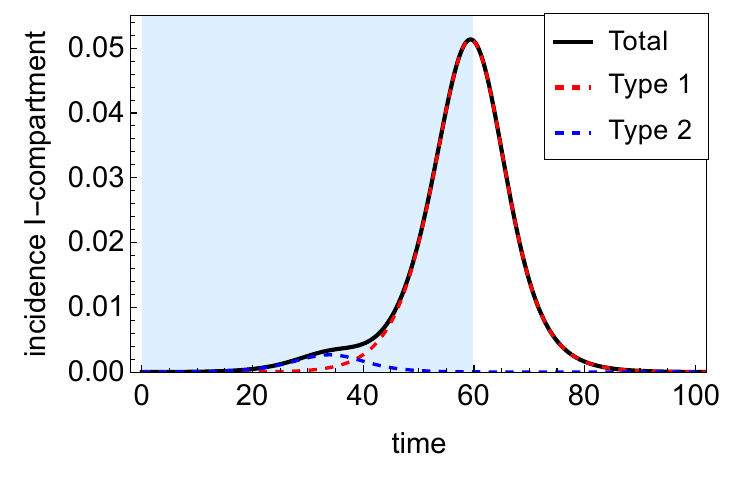}
         \caption{Incidence of infectious cases}  
          \label{incidenceIhitsmall}
     \end{subfigure}
     \hfill
     \begin{subfigure}[b]{0.325\textwidth}
         \centering
         \includegraphics[width=\textwidth]
         {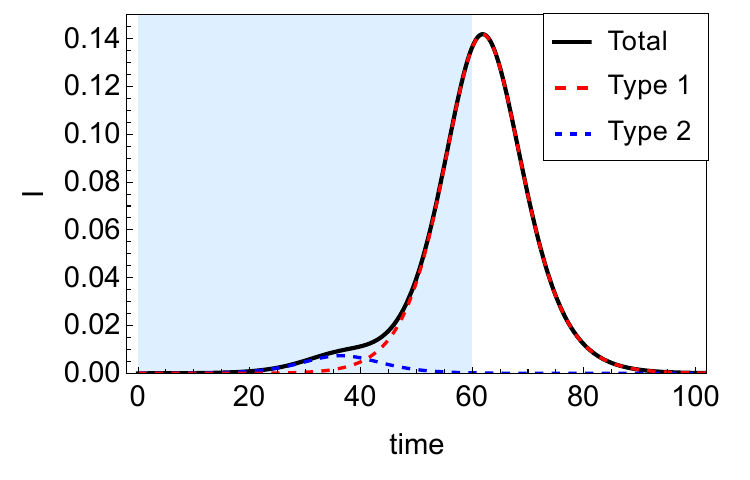}
         \caption{Infectious cases}   
          \label{exheterohitsmallI}
     \end{subfigure}
      \begin{subfigure}[b]{0.325\textwidth}
         \centering
         \includegraphics[width=\textwidth]
         {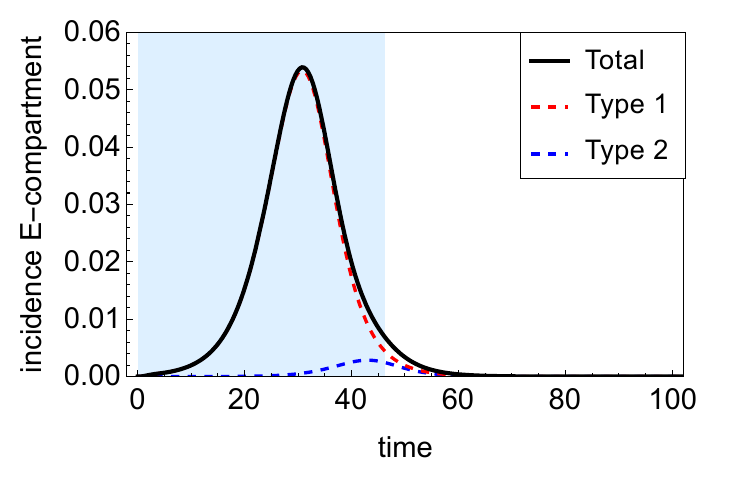}
         \caption{ Incidence of infections}
         \label{incidenceEhitlarge}
     \end{subfigure}
     \hfill
     \begin{subfigure}[b]{0.325\textwidth}
         \centering
         \includegraphics[width=\textwidth]
         {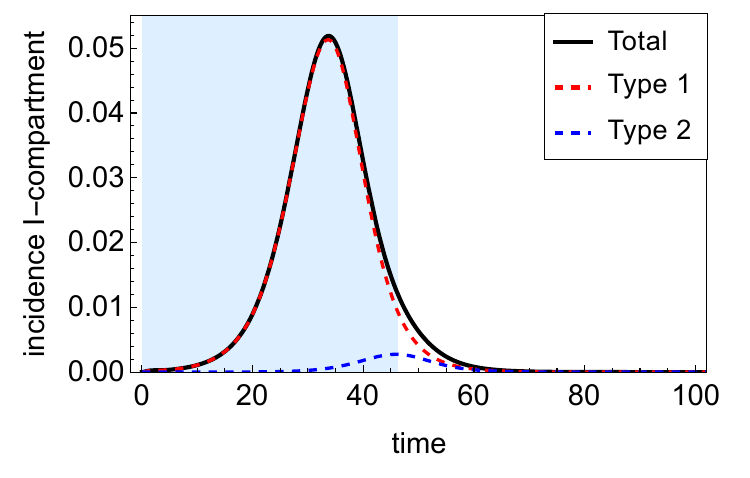}
         \caption{Incidence of infectious cases}  
          \label{incidenceIhitlarge}
     \end{subfigure}
     \hfill
     \begin{subfigure}[b]{0.325\textwidth}
         \centering
         \includegraphics[width=\textwidth]
         {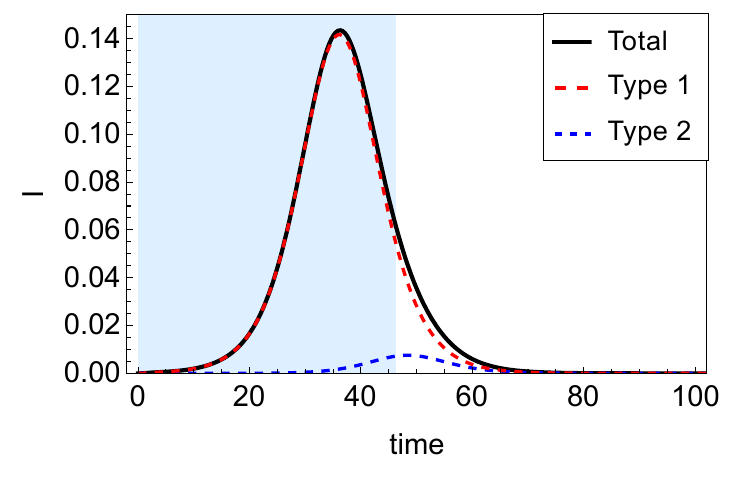}
         \caption{Infectious cases}   
          \label{exheterohitlargeI}
     \end{subfigure}
   \caption{{\bf HIT in case of a small `isolated' community.}
We model two groups, comprising 5\% and 95\% of the population. 99\% of contacts of individuals in the small group are with other individuals in the small group. The groups have the same parameters (the latent period  is Gamma-distributed with shape-parameter 3 and scale parameter  1, the infection period is exponentially distributed with mean 3). In the initial phase, the expected number of secondary cases per primary case is 3, irrespective of the group. Initially there is no immunity. In (a), (b), and (c) a fraction 0.001 of the small population is infectious  while there are initially no infectious individuals in the large population. 
In (d), (e), and (f) a fraction 0.001 of the large population is infectious  while there are initially no infectious individuals in the small population.
The shading changes at the HIT.
 (a) and (d): Incidence of new infections. (b) and (e): Incidence of new infectious individuals. (c) and (f): I-compartment as fraction of the total population}\label{hitstart}
   \end{figure}

\subsection{A simple yet illuminating example (showing, among other things, that there is no clear relationship between $R_0$ and final size)}

   Let $\Omega$ consist of just two points, labeled 1 and 2. Following notational custom, we now represent the last two arguments of $A$ as indices. So $A_{ij}(\tau)$ is the expected force of infection exerted on an individual of type $i$ by an individual of type $j$ that was infected time $\tau$ ago. Similarly we denote $N$ times the fraction of the population that is of type $i$ by $N_i$.

The $2\times 2$ matrix $\mathcal K_0$ is accordingly defined by

\begin{equation}   (\mathcal K_0)_{ij}  =  N_j \int_{0}^{\infty} A_{ij}(\tau) d\tau\end{equation}
 for $i,j\in\{1,2\}$.

As explained in Section 3, $\mathcal K_0$ is the NGM in terms of fractions. The more traditional NGM in terms of numbers is $K$ with
\begin{equation}    K_{ij}  =   N_i \int_0^{\infty} A_{ij}(\tau) d\tau\end{equation}
and the two are related by \eqref{K0_fractions} with $T$ the scaling of the axes given by
\begin{equation}   T x = \left(\begin{array}{c}                  N_1 x_1\\N_2 x_2\\
\end{array}\right)
\end{equation}
   When (i) the type $j$ only affects the contact intensity $c_j$ and (ii) a fraction $c_j N_j / (c_1 N_1 + c_2 N_2)$ of the contacts of any individual is with individuals of type $j$, we are in the separable mixing situation described in the Sections 5 and 6 and

\begin{equation}      A_{ij} (\tau) = \frac{c_i c_j}{ c_1 N_1 + c_2 N_2} b(\tau)\end{equation}

In this case the range of $\mathcal K_0$ is spanned  by      $\left(\begin{array}{c}c_1\\c_2\end{array}\right)$ and this vector therefore is the eigenvector corresponding to the unique non-zero eigenvalue

\begin{equation}     R_0 = \frac{c_1^2 N_1 + c_2^2 N_2}{c_1 N_1 + c_2 N_2}  \int_0^{\infty} b(\tau) d\tau\end{equation}

In terms of the distribution of $c$ in the population, the first factor can be written as

$$      \mbox{mean  +  variance/mean}$$

(this classical result was important in the context of HIV, as it showed that ignoring the variance of the (homo)sexual activity distribution leads to a wrong estimate of $R_0$; the underlying reason is that very active individuals are BOTH more susceptible AND more infectious).
   It may happen that $N_1 \ll N_2$ but $c_1 \gg c_2$ . In such a situation the type 1 subpopulation forms a core group  of  superspreaders, meaning that it is a small group contributing heavily to transmission. Note that the distribution of the two types is:
   
   \begin{equation*}
   \begin{array}{cl}
   N_1 : N_2&\mbox{among all  individuals,}\\
    c_1 N_1 : c_2 N_2 &\mbox{in the incidence during the initial phase of the outbreak,}\\
c_1^2 N_1 : c_2^2 N_2 &\mbox{among the infectors during the initial phase.}
\end{array}
\end{equation*}
   A simple computation shows that

\begin{equation}
    \frac{\partial R_0}{\partial c_2} < 0  \mbox{ if } \frac{N_2}{N_1}  < \frac{c_1}{c_2} \left( \frac{c_1}{c_2} - 2 \right)
\end{equation}

At first it might seem counterintuitive that $R_0$ can DECREASE when individuals of a subgroup INCREASE their contact rate. But once one realises that a side effect might be that the members of the core group have less WITHIN-core-group contacts, it should be clear how the intuition should be up-dated.
   The relation \eqref{s_w} now takes the form

\begin{equation}s_i(t) = e^{- c_i w(t)}\label{si}\end{equation}

while \eqref{wRE} boils down to

\begin{equation}  w(t) = \int_0^{\infty} b(\tau) \left(\frac{c_1 N_1}{c_1 N_1 + c_2 N_2} ( 1 - s_1(t - \tau) )  +  \frac{c_2 N_2}  {c_1 N_1 +c_2 N_2} ( 1 - s_2(t - \tau) ) \right) d\tau\end{equation}

combined with \eqref{si}. In both of these relations we can take the limit $t\to \infty$. This yields an equation for $w(\infty)$ and the identities

\begin{equation}  s_i(\infty) = e^{- c_i w(\infty)}\end{equation}

and from these we can compute the overall final escape fraction

\begin{equation}   s(\infty) := \frac{ s_1(\infty) N_1  +  s_2(\infty) N_2 }{N_1 + N_2}\end{equation}

In the special case $c_1 = c = c_2$ there is, after all, no heterogeneity. In the special case $c_1 = c$ , $c_2 = 0$ the infection only circulates in subpopulation 1. In both special cases we have

\begin{equation}   R_0 = c \int_0^{\infty} b(\tau) d\tau\end{equation}

and the equation

\begin{equation} w(\infty) = \int_0^{\infty} b(\tau) d\tau \left( 1 - e^{-c w(\infty)} \right)\label{winf}\end{equation}

showing that $w(\infty)$ is the same in these two cases.

On the other hand, we have 

\begin{equation}  s(\infty) = 1 - \frac{w(\infty)}{\int_0^{\infty} b(\tau) d\tau}\end{equation}
in the homogeneous case (which allows to rewrite \eqref{winf} in the more familiar form $s(\infty) = e^{-R_0 (1 - s(\infty))}$, while in the `disconnected' case we find

\begin{equation}  s(\infty) = \frac{N_1}{N_1 + N_2} \left(1-\frac{w(\infty)}{\int_0^{\infty} b(\tau) d\tau } \right)+ \frac{N_2}{N_1 + N_2}\end{equation}

clearly showing that for small $N_1$ almost the entire population `escapes' infection, simply since almost all individuals belong to the isolated type 2 subpopulation. By continuity we obtain a similar situation for small values of $c_2$: a small core group of superspreaders has a large impact on $R_0$ but, due to its smallness, not necessarily on the size of the outbreak. We conclude that heterogeneity can `destroy' the monotone relationship between $R_0$ and final size.

\section{Summary and Discussion}
 In 1927, inspired by work of Sir Ronald Ross and Hilda Hudson \cite{Ross1916,Ross1917}, William Ogilvy Kermack and Anderson Gray McKendrick \cite{Kermack1927} introduced a simple yet very powerful idea into the Mathematical Epidemiology of Infectious Diseases: they took as the key model ingredient a description of how a newly infected individual is expected to contribute to the future force of infection on other individuals. More precisely, they employed a function $A$ of a time variable $\tau$, with $A$ the expected contribution to the force of infection and $\tau$ the time elapsed since exposure (often $\tau$ is called ‘infection-age’).
   In exactly this spirit we here incorporate host heterogeneity by taking as the starting point a function $A$ of three variables, $x$, $\tau$ and $\xi$, where $A$ and $\tau$ keep their interpretation, while $x$ specifies the trait of the individual subjected to the force of infection and $\xi$ specifies the trait of the infected individual. Except for a brief excursion in Appendix B, we assume that the trait of an individual is a static characteristic. We use a measure $\Phi$ to describe the trait distribution in the host population, thus unifying models with a subdivision into finitely many discrete traits (then usually called `types') and models incorporating a continuum of traits.
   
   A first aim of our work is to show that many familiar qualitative features of epidemic outbreak models can easily be characterized in this general setting. A highlight, in our opinion, is the formulation of the final size equation in terms of the Next Generation Operator acting on fractions (a precursor of this result can be found in V. Andreasen’s paper \cite{Andreasen2011}, but here we make it more explicit in a much more general setting). Concerning quantitative aspects, our recommendation is to resort to a discrete time formulation as in \cite{Diekmann2021}, but we do not elaborate on this. 
   
   A second aim of our work is to reveal the simplification that occurs when the function $A$ of three variables is actually the product of three functions of one variable, i.e., in case of separable mixing, as it is called in modelling terms. The main result is that in this case we are essentially back to the scalar `homogeneous' case, but with a modified nonlinearity that incorporates the effects of the heterogeneity. Triggered by the Covid-19 outbreak, there has recently been a lot of attention for the influence of heterogeneity on the Herd Immunity Threshold (HIT). The scalar character that results from separable mixing greatly facilitates the characterization of the HIT. When the trait space is $(0,\infty)$ and $\Phi$ is the Gamma Distribution one even obtains explicit expressions showing that the HIT is quickly reached when the variance of the trait distribution is high, a point stressed by G. Gomes and others in \cite{Britton2020,Gomes2022,Montalban2022,Neipel2020,Tkachenko2021}. Here we showed that this holds for general $\tau$ dependence of the infectiousness (and not just for SIR or SEIR compartmental models; please note that, as shown in the companion paper \cite{Diekmann2022}, our general framework allows to derive in a rather easy way separable mixing heterogeneous variants of these and other compartmental models). Not surprisingly, the effect of heterogeneity on the HIT is reduced when the trait is dynamic, see \cite{Tkachenko2021b}
   and Appendix B.
   
   In order to specify the model ingredient $A$ in more detail, one has to reflect on the influence of host population size and composition on contact intensities. This gives rise to a highly nontrivial modelling difficulty: if we have information about the contact structure for a certain host population size and composition, how can we extrapolate this information to situations in which the size and composition are different? A third aim of our work is to call attention to this challenge, partly by reviving the Heesterbeek-Metz pair formation model introduced in \cite{Heesterbeek1993}. Our main contribution to this topic, is to present it as an open problem!

\section*{Acknowledgements}
We thank Horst Thieme and an anonymous referee for helpful suggestions to improve the exposition.


\appendix
\section{Mathematical statements}

 In this appendix we prove some elementary auxiliary results that are used in the main text. Throughout this section
$z$ denotes a scalar.

\begin{customlemma}{\!\!A}
\label{lemmaA1}
\hangindent\leftmargini
\textup{} 
\begin{itemize}
\item[i)] $1-e^{-z}\leq z \text{ for }z\geq 0$.
\item[ii)] $1 - e^{-z} \geq \frac{1 - e^{-z_0}}{z_0} z$ for $z_0 > 0$ and $0 \leq z \leq z_0$.
\item[iii)] $\forall \epsilon>0$ $\exists \delta=\delta(\epsilon)$ such that $1-e^{-z}\geq(1-\epsilon)z$ for $0\leq z\leq \delta(\epsilon)$.
\item[iv)] $1-e^{-\theta z}>\theta(1-e^{-z})$ for $z>0$ and $0<\theta<1$.
\item[v)] $(1-z)\log{(1-z)}+\rho z>0$ for $0<z< 1$ and $\rho>1$.
\end{itemize}
\end{customlemma}
\begin{proof}
\hangindent\leftmargini
\textup{} 
\begin{itemize}
\item[i)]  Let $h(z):=1-e^{-z}-z$. Then $h(0)=0$ and $h'(z)=e^{-z}-1<0$ for $z>0$.
\item[ii)] Define $h(z):= 1 - e^{-z} - \frac{1 - e^{-z_0}}{z_0} z$ then $h(0) = 0 = h(z_0)$ and $h'(z) = e^{-z} - \frac{1 - e^{-z_0}}{z_0}$, $h''(z) = -e^{-z} < 0$. We note that $h'(0) = 1 - \frac{1-e^{-z_0}}{z_0} > 0$ (see the proof of Lemma \ref{lemmaA1}.i), while $h'(z_0) = \frac{(z_0 + 1) e^{-z_0} - 1}{z_0} < 0$. It follows that $h(z) > 0$ for $0 < z < z_0$.
\item[iii)] Let now $h(z):=1-e^{-z}-(1-\epsilon)z$. Then $h(0)=0$ and $h'(z)=e^{-z}-(1-\epsilon)$. So $h'(0)=\epsilon>0$ and consequently $h$ is positive for small positive $z$.
\item[iv)] Define, for given $z>0$, $H(\theta):=1-e^{-\theta z}-\theta(1-e^{-z})$. Then $H$ is continuous, $H(0)=0$, $H(1)=0$, $H'(\theta)=ze^{-\theta z}-1+e^{-z}$
and $H''(\theta)=-z^2e^{-\theta z}<0$. So $H$ cannot have two or more zero's in between 0 and 1. Since $H'(0)=z-1+e^{-z}>0$ (see Lemma \ref{lemmaA1}{.i)}), $H$ is positive for $0<\theta<1$.
\item[v)] Let $h(z):=(1-z)\log{(1-z)}+\rho z$, then $h(0)=0$ and $h'(z)=-1-\log{(1-z)}+\rho>0$ for $0\leq z<1$, so $h$ is positive on $[0,1]$.
\end{itemize}
\end{proof}

\section{Dynamic heterogeneity}\label{appendix_dynamic}

So far we considered a host population consisting of individuals with different STATIC traits, where `static' refers to the assumption that individuals do not change trait as the pathogen spreads through the population. The aim of this appendix is to warn readers that results may change significantly when the trait itself is a dynamic variable. 

We distinguish:

\begin{enumerate}
\item [1] models with trait-dynamics \textit{not} influenced by disease status,
\item [2] models with trait-dynamics influenced by disease status,
\end{enumerate}

with the second class of models typically requiring more model assumptions and exhibiting more complex dynamics.



\subsection{dynamic heterogeneity: \textit{not} influenced by disease status} We introduce a model with the goal to study the impact of dynamic heterogeneity in a framework where dynamics are not influenced by disease status. Individuals are characterized by the number of contacts they make per unit of time. Type $1$ has contact rate $c_1$, type $2$ has contact rate $c_2$. Transitions occur at rates
\begin{equation}
\nu \begin{pmatrix}-\theta & \frac{\theta}{\theta-1} \cr \theta & -\frac{\theta}{\theta-1}\cr \end{pmatrix},
\end{equation}
with $\nu>0$ the frequency (note that a round trip takes on average $\frac{1}{\nu\theta}+\frac{\theta-1}{\nu \theta}=\frac{1}{\nu}$) and $\theta>1$ a measure for the asymmetry in the sojourn times, and hence for the asymmetry of the two steady state subpopulations.
We assume the subpopulations to be in their steady states. Hence, if $N$ denotes the total population size and $N_1$ and $N_2$ the subpopulation sizes then 
\begin{equation}\label{1.2}
\begin{pmatrix} N_1\cr N_2 \end{pmatrix}=\frac{N}{\theta}\begin{pmatrix} 1\cr \theta-1 \end{pmatrix}.
\end{equation}

The average contact rate $c$ is hence given by
\begin{equation} \label{contact_rate}
c=\frac{c_1}{\theta}+\frac{\theta-1}{\theta}c_2=c_2+\frac{c_1-c_2}{\theta}.
\end{equation}
As a normalization, we fix $c$ and require that $c_2 \le c$.

It follows that there are three ``free'' parameters: $\nu>0$, $\theta>1$, $0 \le c_2 \le c$. We assume proportionate mixing: if an individual makes a contact, it is with probability
\begin{equation}
\frac{c_1N_1}{c_1N_1+c_2N_2}
\end{equation}
with an individual of type $1$.

\subsubsection{The SIR model}

We will combine the `dynamic heterogeneous' ingredients in the previous section with a homogeneous version of the epidemic model, the SIR model, as described by
\begin{equation}
\begin{aligned}
&\frac{dS}{dt}=-\beta c \frac{SI}{N} \cr
&\frac{dI}{dt}=\beta c \frac{SI}{N}-\alpha I, \cr
\end{aligned}
\end{equation}
with $S$ and $I$ the amount of susceptible and infectious individuals respectively. Here $\beta$ with $0<\beta \le1$, is the probability of transmission in a contact between an infectious and a susceptible individual.  The expected duration of the infectious period equals $\frac{1}{\alpha}$.
By scaling of the time variable, we achieve that $\alpha=1$.
By scaling of $c$ we achieve that $\beta=1$.
We now list some relevant features:

\begin{align}
{\rm basic~reproduction~number~} R_0 &=c, \label{R_0_homog}\\ 
{\rm herd~immunity~threshold~(HIT)} ~ \bar{s}&=\frac{\bar{S}}{N}=\frac{1}{R_0}=\frac{1}{c}, \label{HIT_SIR}\\
{\rm final~size~equation}~~s(\infty)&=e^{-c(1-s(\infty))}.\label{finalsize_SIR}
\end{align}

We define the overshoot as $\bar{s}-s(\infty)$.

\subsubsection{The combined model}

Recall that we assume that type transitions are not influenced by disease status and that $\alpha=1$ and $\beta=1$.
Recall that
\begin{equation}
c=c_1\frac{N_1}{N}+c_2\frac{N_2}{N}=\frac{c_1}{\theta}+(1-\frac{1}{\theta})c_2.
\end{equation}
Keep in mind that if a contact is with a type $i$ individual, it is with probability $\frac{S_i}{N_i}$ with a susceptible individual. The combined model is defined as follows

\begin{equation}
\begin{aligned}
&\frac{dS_1}{dt}=-\frac{c_1S_1}{c_1N_1+c_2N_2}(c_1I_1+c_2I_2)-\nu \theta S_1+\nu \frac{\theta}{\theta-1}S_2 \cr
&\frac{dS_2}{dt}=-\frac{c_2S_2}{c_1N_1+c_2N_2}(c_1I_1+c_2I_2)+\nu \theta S_1-\nu \frac{\theta}{\theta-1}S_2 \cr
&\frac{dI_1}{dt}=\frac{c_1S_1}{c_1N_1+c_2N_2}(c_1I_1+c_2I_2)-\nu \theta I_1+\nu \frac{\theta}{\theta-1}I_2-I_1 \cr
&\frac{dI_2}{dt}=\frac{c_2S_2}{c_1N_1+c_2N_2}(c_1I_1+c_2I_2)+\nu \theta I_1-\nu \frac{\theta}{\theta-1}I_2-I_2 \cr
\end{aligned}
\end{equation}

Define $s_j=\frac{S_j}{N}$ and $i_j=\frac{I_j}{N}$.  Then

\begin{equation} \label{combined_model_ode}
\begin{aligned}
&\frac{ds_1}{dt}=-\frac{c_1}{c}(c_1i_1+c_2i_2)s_1-\nu \theta s_1+\nu \frac{\theta}{\theta-1}s_2 \cr
&\frac{ds_2}{dt}=-\frac{c_2}{c}(c_1i_1+c_2i_2)s_2+\nu \theta s_1-\nu \frac{\theta}{\theta-1}s_2 \cr
&\frac{di_1}{dt}=\frac{c_1}{c}(c_1i_1+c_2i_2)s_1-\nu \theta i_1+\nu \frac{\theta}{\theta-1}i_2-i_1 \cr
&\frac{di_2}{dt}=\frac{c_2}{c}(c_1i_1+c_2i_2)s_2+\nu \theta i_1-\nu \frac{\theta}{\theta-1}i_2-i_2 \cr
\end{aligned}
\end{equation}

\subsubsection{Static heterogeneity}\label{static_hetero}

In the combined model of the previous subsection we put $\nu=0$, but keep \eqref{1.2}. The resulting model is one with static heterogeneity. If we freeze the values $s_1$ and $s_2$ and introduce $x:=c_1i_1+c_2i_2$ (as a metric for the subpopulation of infectious individuals), we obtain
\begin{equation}
\frac{dx}{dt}=\left(\frac{c_1^2}{c}s_1+\frac{c_2^2}{c}s_2-1\right)x.
\end{equation}
We conclude that
\begin{equation}\label{static_R0}
R_0=\frac{c_1^2}{c}\frac{1}{\theta}+\frac{c_2^2}{c}(1-\frac{1}{\theta})
\end{equation}
and that the HIT is characterized by
\begin{equation}
\frac{c_1^2}{c}\bar{s}_1+\frac{c_2^2}{c}\bar{s}_2=1.
\end{equation}

The latter can be reduced to an equation for the scalar variable $w$, where
\begin{equation}
s_1(t)=\frac{1}{\theta}e^{-c_1 w(t)},  ~~s_2(t)=(1-\frac{1}{\theta})e^{-c_2 w(t)}. \label{expr_s1_s2}
\end{equation}
The equation for the HIT in $w$ reads 
\begin{equation} \label{HIT_eq_w}
\frac{c_1^2}{c}\frac{1}{\theta}e^{-c_1\bar{w}}+\frac{c_2^2}{c}(1-\frac{1}{\theta})e^{-c_2\bar{w}}=1,
\end{equation}
and for that value of $w$ the HIT, as a fraction of the total population, is given by
\begin{equation}
{\rm HIT}: ~ ~\bar{s}_1+\bar{s}_2=\frac{1}{\theta}e^{-c_1\bar{w}}+(1-\frac{1}{\theta})e^{-c_2\bar{w}}.
\end{equation}

The equation
\begin{equation}
\frac{dw}{dt}=-w+\frac{c_1}{c}\frac{1}{\theta}(1-e^{-c_1 w})+\frac{c_2}{c}(1-\frac{1}{\theta})(1-e^{-c_2 w}),
\end{equation}
can be easily derived by combining the defining relation (see \eqref{combined_model_ode})
\begin{equation}
\frac{dw}{dt}=\frac{c_1}{c}i_1+\frac{c_2}{c}i_2, \label{force_inf_eq}
\end{equation}
with
\begin{equation}
\frac{ds_j}{dt}+\frac{di_j}{dt}=-i_j, \quad j=1,2 \label{net_out}
\end{equation}
(Use \eqref{net_out} to write the r.h.s. of \eqref{force_inf_eq} as a time derivative; next integrate from $-\infty$ to $t$ and use \eqref{expr_s1_s2} and \eqref{force_inf_eq}).

The limit $w(\infty)$ is accordingly characterized by
\begin{equation}\label{finalsize_eq_w}
w(\infty)=\frac{c_1}{c}\frac{1}{\theta}(1-e^{-c_1 w(\infty)})+\frac{c_2}{c}(1-\frac{1}{\theta})(1-e^{-c_2 w(\infty)}).
\end{equation}
The fraction that escapes infection is given by
\begin{equation}
s_1(\infty)+s_2(\infty)=\frac{1}{\theta}e^{-c_1w(\infty)}+(1-\frac{1}{\theta})e^{-c_2w(\infty)},
\end{equation}
and the overshoot by
\begin{equation}
\bar{s}_1+\bar{s}_2-s_1(\infty)-s_2(\infty).
\end{equation}

We obtain by combining \eqref{contact_rate} with \eqref{static_R0}
\begin{equation}
    R_0 = \frac{\theta}{c}\Big((1 - \frac{1}{\theta})c_2^2 -2 c (1 - \frac{1}{\theta}) c_2 + c^2  \Big).
\end{equation}

The first and second derivative of $R_0$ to $c_2$ are
\begin{equation}
    \frac{dR_0}{dc_2} = \frac{\theta}{c}\Big(2(1 - \frac{1}{\theta})c_2 -2 c (1 - \frac{1}{\theta}) \Big)
\end{equation}
and
\begin{equation}
    \frac{d^2R_0}{dc_2^2} = \frac{\theta}{c}\Big(2(1 - \frac{1}{\theta}) \Big)
\end{equation}
respectively. Recall $\theta > 1$, hence $R_0$ is a quadratic function in $c_2$ with strictly positive second derivative and minimum in $c_2 = c$ with value $c$. Thus, when $c_2 \neq c$ we have $R_0 > c$ for a model with static heterogeneity ($\nu = 0$).

\subsubsection{Comparing $R_0$, $s(\infty)$ and HIT in the extremes, i.e., $\nu=0$ and $\nu=\infty$}

We consider the homogeneous SIR model as describing the limit $\nu \to \infty$ (we refrain from providing a formal justification). We then have $R_0 = c$, see \eqref{R_0_homog}. Using the statement in the last paragraph of Section \ref{static_hetero} we can conclude that $R_0(\nu = 0)> R_0(\nu = \infty)$ for $c_2 \ne c$. Thus, when the contact rate of different types differ, the basic reproduction number is strictly higher when considered in a ($\nu = 0$) heterogeneous setting than in a ($\nu = \infty$) homogeneous setting. 

For the final size $s(\infty)$ and HIT we find through numerical investigation a similar relation between $\nu = 0$ and $\nu = \infty$ as shown in Figure \ref{fig:HIT_final_size}. 

\begin{figure}[htbp] 
\centering
\includegraphics[width=0.8\linewidth]{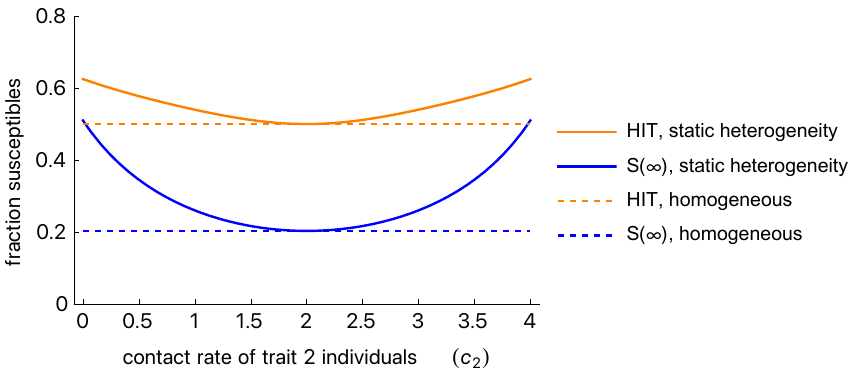} 
\caption{HIT and the final size as a function of contact rate $c_2$ in a static heterogeneous setting ($\nu=0$) and in a homogeneous setting ($\nu=\infty$). $\theta=2$. $c_1$ is a function of $c_2$ such that the average contact rate equals $c=2$. The results for $\nu = 0$ follow from numerically solving equations \eqref{HIT_eq_w} and \eqref{finalsize_eq_w}. The results for $\nu = \infty$ follow directly from \eqref{HIT_SIR} and from solving \eqref{finalsize_SIR}.}\label{fig:HIT_final_size}
\end{figure}

\subsubsection{Numerical investigation of dynamic heterogeneity ($0 < \nu < \infty$)}
By interpolating the results in the extremes one may tend to conclude that $R_0$, the HIT and final size are a decreasing function of the rate of trait change $\nu$. However, numerical investigation for $0 < \nu < \infty$ shows otherwise, see Figure \ref{fig:nu_final_size}.

\begin{figure}[htbp] 
\centering
\includegraphics[width=0.5\linewidth]{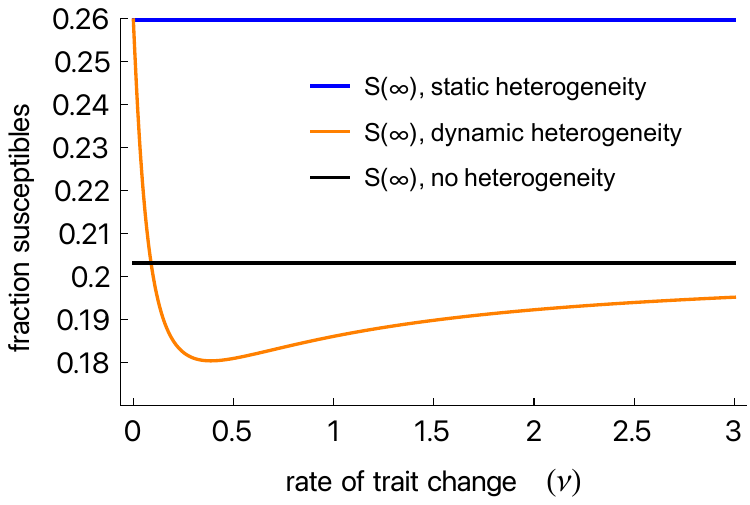} 
\caption{Final size as a function of $\nu$ in a homogeneous, a static heterogeneous and a dynamic heterogeneous setting. $\theta=2$, $c = 2$ , $c_1 = 3$, $c_2 = 1$ The results follow from numerically solving \eqref{combined_model_ode}.}\label{fig:nu_final_size}
\end{figure}

Our interpretation of Figure \ref{fig:nu_final_size} suggests 3 types of mechanisms at play when considering dynamic heterogeneity (not influenced by disease-status).

\begin{itemize}
    \item[1] When heterogeneity is (nearly) static, or $\nu=0$, individuals of the group with higher contact rate infect more individuals, but also are being infected more. One would therefore expect a relatively fast and large outbreak in the group with higher contact rate. After this, the dissemination is mainly driven by individuals with lower contact rate. This causes relatively less infections among the lower contact rate individuals and eventually less infections in total.
    \item[2] A new mechanism emerges in addition to (1) when heterogeneity is dynamic and $\nu$ is of similar order as the rate of susceptible individuals becoming infected. While mechanism (1) still partly holds, the number of higher-contact-rate susceptible individuals decreases  slowly due to `inflow' of susceptible lower-contact-rate individuals. This leads to more infections overall.
    \item[3] When the rate of trait-change is much higher than the rate of susceptibles becoming infected, we find the impact of mechanisms (1) and (2) being diluted. This does not come as a surprise as individuals can hardly be distinguished anymore from individuals with an average contact rate $c$ in a homogeneous model.
\end{itemize}

Our main message is that new mechanisms arise when a model is considered incorporating a dynamic heterogeneity setting instead of a static heterogeneity one. Even when the total number of infections is lower in a static heterogeneity setting than in a homogeneous one, the result can be opposite when considering from a dynamic heterogeneity setting as shown in Figure \ref{fig:nu_final_size}.

\subsection{Models in which trait dynamics incorporates feedback (in particular information about incidence, prevalence and/or own disease status)}

The recent outbreak of Covid-19 provides much motivation for formulating and analysing models that incorporate dynamic heterogeneity and allow the dynamics of the trait of an individual to be influenced by (information about) both the population level epidemic dynamics and its own health (and vaccination) status. We refer to \cite{Juher2023,Tang2022,Teslya2022,Tkachenko2021, Berestycki2023, Zhang2023} for recent examples, to \cite{Wambua2023} for data aspects and to \cite{Liu2007, Poletti2009,Funk2010, Just2018, d'Onofrio2009, Cui2008} for pre-Covid pioneering work. The subject is still in its infancy and we expect much more work in years to come.

To organize our thoughts, we first focus on trait dynamics, while assuming that, both at the individual level and at the population level, the disease related dynamics is known/given. This allows us to think of the trait as a stochastic variable that follows a Markov process in a non-constant environment, so with time-dependent transition rates. The modelling task is to specify these transition rates. Here we limit ourselves to the simpler task of indicating the variables on which the transition rates depend (without specifying the dependence itself).

The following classification attempts to structure the sea of possibilities a little bit, without claiming completeness. These rates may depend on:
\begin{enumerate}
\item the health status of the individual itself
\item the perceived current incidence or prevalence (with the perception being based on information; how information is handled, in particular whether it is trusted, may depend in part on the trait itself); in network models one can work with a local version based on information about the health status of acquaintances
\item governmental advice/rules (with compliance possibly depending on the trait itself)
\end{enumerate}
Likewise one can describe the epidemic dynamics while pretending that the dynamic trait distribution is known. By coupling the two time-inhomogeneous dynamical systems one then obtains, through a fixed point argument, an autonomous nonlinear dynamical system. If time scale differences exist (for instance, trait dynamics might be fast relative to the epidemic time scale) these should be exploited to facilitate the analysis!

Admittedly the above is a rather sketchy description of a huge class of models. It is intended as a stimulus, as an invitation, and not as a solid exposition.


\begin{thebibliography}{99}

\bibitem{Almeida2021}
L. Almeida, P. Bliman, G. Nadin, B. Perthame and N. Vauchelet,  {\textit{Final size and convergence rate for an epidemic in heterogeneous populations}}, 
Math. Models Methods Appl. Sci. {\bf 31}(2021), 1021-1051.

\bibitem{Andreasen2011}
V. Andreasen, {\textit{The Final Size of an Epidemic and Its Relation to the Basic Reproduction Number}}, Bull. Math. Biol. {\bf 73}(2011), 2305-2321. 

\bibitem{Avram2023}
F. Avram, R. Adenane, L. Basnarkov, G. Bianchin, D. Goreac and A. Halanay, {\textit {An Age of Infection Kernel, an R Formula, and Further Results for Arino-Brauer A, B Matrix Epidemic Models with Varying Populations, Waning Immunity, and Disease and Vaccination Fatalities.}}, Mathematics {\bf 11}(2023):1307.

\bibitem{Ball2023}
F. Ball, L. Critcher, P. Neal and D. Sirl, {\textit{The impact of household structure on disease-induced herd immunity.}}  J. Math. Biol. {\bf 87}, 83 (2023), 83. 

\bibitem{Barril2023}
C. Barril, P. Bliman and S. Cuadrado, {\textit{Final Size for Epidemic Models with Asymptomatic Transmission}},  Bull. Math. Biol. {\bf 85}(2023), 52.

\bibitem{Batkai2017}
A. B{\'a}tkai, M. Kramar Fijav{\v{z}} and A. Rhandi, {\it {Positive Operator Semigroups : from Finite to Infinite Dimensions}}, Birkha{\"u}ser, Basel, 2017.

\bibitem{Berestycki2023}
H. Berestycki, B. Desjardins, J. Weitz and J. Oury, {\it {Epidemic modeling with heterogeneity and social diffusion}}, J. Math. Biol. {\bf 86}(2023), 60.

\bibitem{Bootsma2022}
M. Bootsma, B. Kolen, 
M. de Vries, M. Kretzschmar,
and N. Mouter. {\it{Effectiviteit van verschillende toepassingen van het Coronatoegangsbewijs }} (in Dutch)(2022) https://open.overheid.nl/repository/ronl-aa2988ce-324f-4c1c-ad39-459358e32bfe/1/pdf/effectiviteit-coronatoegangsbewijs-eindversie-tu-delft4.pdf

\bibitem{Bootsma2023}
M. Bootsma, K. Chan, O. Diekmann and H. Inaba. {\it {Separable mixing: The general formulation and a particular example focusing on mask efficiency}}, Math. Biosci. Eng. {\bf 20}(2023), 10.

\bibitem{Brauer2009}
F. Brauer and J. Watmough, {\it {Age of infection epidemic models with heterogeneous mixing}}, J. Biol. Dyn., {\bf 3}(2009), 324-330.

\bibitem{Britton2020}
T. Britton, F. Ball and P. Trapman,  {\textit{A mathematical model reveals the influence of population heterogeneity on herd immunity to SARS-CoV2}}, Science {\bf 369}(2020), 846-849. 

\bibitem{Breda2012}
D Breda, O Diekmann, W. de Graaf, A Pugliese and R Vermiglio, {\textit {On the formulation of epidemic models (an appraisal of Kermack and McKendrick)}}, J. Biol. Dyn. {\bf 6}(2012), 103-117. 

\bibitem{Breda2021}
D. Breda, F. Florian, J. Ripoll and R. Vermiglio,  {\textit{ Efficient numerical computation of the basic reproduction number for structured populations}}, J. Comp. Appl. Math. {\bf 384}(2021), 113165.

\bibitem{Breda2020}
D. Breda, T. Kuniya, J. Ripoll and R. Vermiglio,  {\textit{ Collocation of next-generation operators for computing the basic reproduction number of structured populations}}, J. Sci. Comput. {\bf 85}(2020), 40.

\bibitem{Breda}
D. Breda, S. De Reggi, F. Scarabel, R. Vermiglio and J. Wu,  {\textit {Bivariate collocation for computing $R_0$ in epidemic models with two structures}}, submitted.


\bibitem{Castro2020}
M. Castro, S. Ares, J. A. Cuesta and S. Manrubia, {\textit{ The turning point and end of an expanding epidemic cannot be precisely forecast}},  Proc. Natl. Acad. Sci. {\bf 117}(2020), 26190-26196.

\bibitem{Chan2013}
K. Chan, {\textit {Impact of demographical change on the severity of an epidemic}}, Master thesis, Utrecht Univeristy, 2013, https://studenttheses.uu.nl/handle/20.500.12932/13069.

\bibitem{Chan2016}
K. Chan, H. Nishiura, O Diekmann, M. Bootsma, {\textit {The impact of population ageing on the severity of an epidemic outbreak}},  unpublished manuscript, (2016).

\bibitem{Cui2008}
J. Cui, Y. Sun and Huaiping Zhu, {\it {The impact of media on the control of infectious diseases}},
J. Dyn. Diff. Equa. {\bf 20}(2008), 31-53.


\bibitem{Diekmann2008}
O. Diekmann, Ph. Getto and M. Gyllenberg,  {\textit {Stability and bifurcation analysis of Volterra functional equations in the light of suns and stars}}, SIAM J. Math. Anal. {\bf 39}(2008), 1023-1069.

\bibitem{Diekmann2007}
O. Diekmann and M. Gyllenberg,  {\textit {Abstract delay equations inspired by population dynamics}}, In : {\it Functional Analysis and Evolution Equations. The G{\"u}nter Lumer Volume}. 
Birkhauser, Basel, 2007, 187-200.

\bibitem{Diekmann2011}
O. Diekmann and M. Gyllenberg,  {\textit {Equations with infinite delay: blending the abstract and the concrete}},  J. Differ. Equ. {\bf 252}(2011), 819-851.



\bibitem{Diekmann2013}
O. Diekmann, J. Heesterbeek and T. Britton, {\it{ Mathematical Tools for Understanding Infectious Disease Dynamics}}, Princeton University Press, Princeton, 2013.

\bibitem{Diekmann2018}
O. Diekmann, M. Gyllenberg and J. Metz,  {\it {Finite Dimensional State Representation of Linear and Nonlinear Delay Systems}} J. Dyn. Diff. Equat. {\bf 30}(2018), 1439-1467. 

\bibitem{Diekmann2021}
O. Diekmann, H. Othmer, R. Planqu{\'e} and M. Bootsma, {\it {The discrete-time Kermack-McKendrick model: A versatile and computationally attractive framework for modeling epidemics}}, Proc. Natl. Acad. Sci. {\bf 118}(2021),  e2106332118.

\bibitem{Diekmann2022}
O. Diekmann and H. Inaba, {\it {A systematic procedure for incorporating separable static heterogeneity into compartmental epidemic models}}, J. Math. Biol. {\bf 86}(2023), 29.

\bibitem{d'Onofrio2009}
A. d'Onofrio and P. Manfredi, {\it {Information-related changes in contact patterns may trigger oscillations in the endemic prevalence of infectious diseases}}, J. Theor. Biol. {\bf 256}(2009), 473-478.

\bibitem{Eikenberry2020}
S. Eikenberry, M. Mancuso, E. Iboi, T. Phan, K. Eikenberry, Y. Kuang, E. Kostelich and A. Gumelet, {\it {To mask or not to mask: Modeling the potential for face mask use by the general public to curtail the COVID-19 pandemic}}, Infect. Dis. Model. {\bf 5}2020, 293-308. 

\bibitem{Franco2022}
E. Franco, O. Diekmann and M. Gyllenberg, {\it {Modelling physiologically structured populations: renewal equations and partial differential equations}}, J. Evol. Equ. {\bf 23}(2023), 46.

\bibitem{Funk2010}
S. Funk, M. Salath{\'e} and V. Jansen, {\textit {Modelling the influence of human behaviour on the spread of infectious diseases: a review}},  
J. R. Soc. Interface {\bf 7}(2010), 1247-1256.


\bibitem{Gomes2022}
M. Gomes, M. Ferreira, R. Corder, J. King, C. Souto-Maior, C. Penha-Gon{\c{c}}alves,  G. Gon{\c{c}}alves,  M. Chikina, W. Pegden and R Aguas,
 {\it {Individual variation in susceptibility or exposure to SARS-CoV-2 lowers the herd immunity threshold}}, J. Theor. Biol. {\bf 540}(2022), 111063.
 
\bibitem{Hadeler2012}
K. Hadeler, {\it {Pair formation}}, J. Math. Biol. {\bf 64}(2012), 613-645.

\bibitem{Heesterbeek1993}
J. Heesterbeek and J. Metz, {\it {The saturating contact rate in marriage and epidemic models}}, J. Math. Biol. {\bf 31}(1993), 529-539.  

\bibitem{Hill2023}
A. Hill, J. Glasser and Z. Feng, {\it {Implications for infectious disease models of heterogeneous mixing on control thresholds}}, J. Math. Biol. {\bf 86}(2023), 53.


\bibitem{Inaba2012}
H. Inaba,  {\it{On a new perspective of the basic reproduction number in heterogeneous environments}}, J. Math. Biol. {\bf 65}(2012), 309-348.


\bibitem{Inaba2014}
H. Inaba,  {\it{On a pandemic threshold theorem of
the early Kermack-McKendrick model
with individual heterogeneity}}, Math. Popul. Stud. {\bf 21}(2014), 95-111.


\bibitem{Inaba2017}
H. Inaba, {\it {Age-Structured Population Dynamics in Demography and Epidemiology}}, Springer, Singapore, 2017.

\bibitem{Inaba2023}
H. Inaba, {\it {Basic concepts for the Kermack and McKendrick model with static heterogeneity}}, arXiv, preprint, 2023, https://arxiv.org/abs/2311.11247.

\bibitem{Juher2023}
D. Juher, D. Rojas and J. Salda{\~n}a, {\textit {Saddle-node bifurcation of limit cycles in an epidemic model with two levels of awareness}}, Phys. D: Nonlinear Phenom. {\bf 448}(2023), 133714.

\bibitem{Just2018}
W. Just, J. Salda{\~n}a and Y. Xin, {\textit {Oscillations in epidemic models with spread of awareness}}, J. Math. Biol. {\bf 76}(2018), 1027-1057.


\bibitem{Katriel2012}
G. Katriel, {\it {The size of epidemics in populations with heterogeneous susceptibility}}, J. Math. Biol. {\bf 65}(2012), 237-262.

\bibitem{Kermack1927}
W. Kermack and A. McKendrick, {\it {A contribution to the mathematical theory of epidemics}}, Proc. R. Soc. Lond. A {\bf 115}(1927), 700-721.

\bibitem{Li2020}
X. Li, J. Yang and M. Martcheva, {\it {Age Structured Epidemic Modeling}}, Springer, Cham, (2020).

\bibitem{Liu2007}
R. Liu, J. Wu and H. Hu, {\it {Media/psychological impact on multiple outbreaks of emerging infectious diseases}},
Comput. Math. Methods Med. {\bf 8}(2007), 153-164.


\bibitem{Manrubia2020}
S. Manrubia, {\it {The Uncertain Future in How a
Virus Spreads}}, Physics {\bf 13}(2020),166. 


\bibitem{Messina2022a}
E. Messina, M. Pezzella and A. Vecchio, {\it{Positive numerical approximation of an integro-differential epidemic model}}, Axioms {\bf 11}(2022), 69.

\bibitem{Messina2022b}
E. Messina, M. Pezzella and A. Vecchio, {\it {A non-standard numerical scheme for an age-of-infection epidemic model}}, J. Comput. Dyn. {\bf 9}(2022),
239-252.

\bibitem{Messina2023_a}
E. Messina, M. Pezzella and A. Vecchio, {\it {A long-time behavior preserving numerical scheme for age-of-infection epidemic models with heterogeneous mixing}},
Appl. Numer. Math. (2023).

\bibitem{Messina2023}
E. Messina, M. Pezzella and A. Vecchio, {\it{Asymptotic solutions of non-linear implicit Volterra discrete
equations}}, J. Comput. Appl. Math. {\bf 425}(2023), 115068.

\bibitem{Messina2023b}
E. Messina, M. Pezzella and A. Vecchio. {\it {Nonlocal finite difference discretization of a class of renewal equation models for epidemics}}, Math. Biosci. Eng. {\bf 20}(2023), 11656-11675.

\bibitem{Montalban2022}
A. Montalb{\'a}n, R. Corder and M. Gomes, {\it {Herd immunity under individual variation and reinfection}}, J. Math. Biol. {\bf 85}(2022), 2.

\bibitem{Mossong2008}
J. Mossong, N. Hens,
    M. Jit,
    Ph. Beutels,
    K. Auranen,
    R. Mikolajczyk,
    M.  Massari,
    S. Salmaso,
    G. Scalia Tomba,
    J. Wallinga,
    J. Heijne,
    M. Sadkowska-Todys,
    M. Rosinska and
    W. Edmunds
{\it {Social contacts and mixing patterns relevant to the spread of infectious diseases}}, PLOS Medicine {\bf 5}(2008), e74.

\bibitem{Neipel2020}
J. Neipel, J. Bauermann, S. Bo, T. Harmon and F. J\"ulicher,  {\it {Power-Law Population Heterogeneity Governs Epidemic Waves}}, PLoS ONE {\bf 15}(2020), e0239678.

\bibitem{Ngonghala2020}
C. Ngonghala, E. Iboi, S. Eikenberry, M. Scotch, C. MacIntyre, M. Bonds and A. Gumel, {\it {Mathematical assessment of the impact of non-pharmaceutical interventions on curtailing the 2019 novel
Coronavirus}}, Math. Biosci. {\bf 325}(2020).

\bibitem{Nguyen2023}
M. Nguyen, A. Freedman, S. Ozbay and S. Levin, {\it {Power-Law Population Heterogeneity Governs Epidemic Waves}}, submitted.


\bibitem{Novozhilov2008}
A. Novozhilov, {\it{ On the spread of epidemics in a closed heterogeneous population}} Math. Biosci. {\bf 215}(2008), 177-185.

\bibitem{Oliva2023}
G. Oliva, S. Bonfigli, P. Cavallo and A. Scala, {\textit {Navigating the Herd Immunity Surface: A Novel Framework for Optimising Epidemic Response Strategies}}, Qeios (2023).

\bibitem{Pastor-Satorras2022}
R. Pastor-Satorras and C. Castellano, {\it {The advantage of self-protecting interventions in mitigating epidemic circulation at the community level}}, Sci. Rep. {\bf 12}(2022).

\bibitem{Poletti2009}
P. Poletti, B. Caprile, M. Ajelli, A. Pugliese and S. Merler, {\textit {Spontaneous behavioural changes in response to epidemics}}, J. Theor. Biol. {\bf 260}(2009), 31-40.

\bibitem{Rass2003}
L. Rass and J. Radcliffe,  {\it {Spatial Deterministic Epidemics}}, American Mathematical Society, Providence, RI, 2003.

\bibitem{Scarabel2021}
F. Scarabel, O. Diekmann and R. Vermiglio, {\it{
Numerical bifurcation analysis of renewal equations via pseudospectral approximation}},
J. Comput. Appl. Math. {\bf 397}(2021), 113611. 

\bibitem{Ross1916}
R. Ross, {\it {An application of the theory of probabilities to the study of a priori pathomety}}, Proc. R. Soc. A {\bf 92}(1916), 204-230 (Part I).

\bibitem{Ross1917}
R. Ross and H. Hudson, {\it {An application of the theory of probabilities to the study of a priori pathomety}} Proc. R. Soc. A {\bf 93}(1917), 212-225 (Part II), 225-240 (Part III).

\bibitem{slimopen} L. Van Schaik, D. Duives, S. Hoogendoorn-Lanser, J. Hoekstra, W. Daamen,  A. Gavriilidou, P. Krishnakumari, M. Rinaldi and S.P. Hoogendoorn. {\it {Understanding physical distancing compliance behaviour using proximity and survey data: A case study in the Netherlands during the COVID-19 pandemic}} Transportation Research Procedia (in press).


\bibitem{Schmidt1990}
B. Schmidt, {\it{Die epidemische Dynamik von Infektionskrankheiten mit multiplem Krankheitsverlauf in strukturierten Bev{\"o}lkerungen Mathematische Modellierung, Sensitivit{\"a}tsanalyse und numerische Simulation der Ausbreitung von AIDS}}, PhD thesis, Universit{\"a}t zu K{\"o}ln, 1990.

\bibitem{Tang2022}
B. Tang, W. Zhou, X. Wang, H. Wu and Y. Xiao, {\textit {Controlling multiple Covid-19 waves: an insight from a multi-scale model linking the behaviour change dynamics to disease transmission dynamics}}, Bull. Math. Biol. {\bf 84}(2022), 106.

\bibitem{Teslya2022}
A. Teslya, H. Nunner, V. Buskens and M. Kretzschmar, {\textit {The effect of competition between health opinions on epidemic dynamics}}, Proc. Natl. Acad. Sci. Nexus {\bf 1}(2022), 1-14.

\bibitem{Thieme1979}
H. Thieme, {\it {On a class of Hammerstein integral equations}}, Manuscr. math. {\bf 29}(1979), 49-84.

\bibitem{Thieme1980}
H. Thieme, {\it {On the boundedness and the asymptotic behaviour of the non-negative solutions to Volterra-Hammerstein integral equations.}}, Manuscr. math. {\bf 31}(1980), 379-412.

\bibitem{Thieme1984}
H. Thieme, {\it {Renewal theorems for linear periodic Volterra integral equations}}, J. Integral Equations {\bf 7}(1984), 253-277.

\bibitem{Thieme1985}
H. Thieme, {\it {Renewal theorems for some mathematical models in epidemiology}}, J. Integral Equations {\bf 8}(1985), 185-216.

\bibitem{Thieme2009a}
H. Thieme, {\it {Distributed susceptibility: a challenge to persistence theory in infectious disease models}}, Disc. Cont. Dyn. Sys. B {\bf 12}(2009), 865-864.

\bibitem{Thieme2009}
H. Thieme,  {\it {Spectral bound and reproduction number for infinite-dimensional population structure and time heterogeneity}}, SIAM J. Appl. Math. {\bf 70}(2009), 188-211.

\bibitem{Thieme2000}
H. Thieme and J. Yang, {\it {On the complex formation approach in modeling predator prey relations, mating, and sexual disease transmission}}, Elect. J. Diff. Eqns. {\bf 5}(2000), 255-283.

\bibitem{Tian2023}
Y. Tian, A. Sridhar, C. Wu, S. Levin, K. Carley, H.  Poor and O. Ya{\u{g}}an, {\it {Role of masks in mitigating viral spread on networks}}, Phys. Rev. E {\bf 108}(2023).

\bibitem{Tkachenko2021}
A. Tkachenko, S. Maslov, A. Elbanna, G. Wong, Z. Weiner and N. Goldenfeld, {\it{
 Time-dependent heterogeneity leads to transient suppression of the COVID-19 epidemic, not herd immunity}}, 
Proc. Natl. Acad. Sci. {\bf 118}(2021), e2015972118.

 \bibitem{Tkachenko2021b}A. Tkachenko, S. Maslov, T. Wang, A. Elbanna, G. Wong and N. Goldenfeld, {\it{ Stochastic social behavior coupled to COVID-19 dynamics leads to waves, plateaus, and an endemic state}}, eLife (2021) https://doi.org/ 10.7554/eLife.68341].

\bibitem{Toorians2021}
M. Toorians, A. MacPherson and T. Davies, {\textit{ Revisiting pathogen transmission in epidemiological models and its role in the disease-diversity relation}}, Preprints (2021), 2021100295. https://doi.org/10.20944/preprints202110.0295.v4. 

\bibitem{Wambua2023}
J. Wambua, N. Loedy, C. Jarvis, K. Wong, C. Faes, R. Grah, B. Prasse, F. Sandmann, R. Niehus, H. Johnson, W. Edmunds, P. Beutels, N. Hens and P. Coletti, {\it {The influence of COVID-19 risk perception and vaccination status on the number of social contacts across Europe: insights from the CoMix study}}, BMC Public Health {\bf 23}(2023).

\bibitem{Wong2020}
G. Wong, Z. Weiner, A. Tkachenko, A. Elbanna, S. Maslov and N. Goldenfeld, 
{\it {Modeling COVID-19 Dynamics in Illinois under Nonpharmaceutical Interventions}}, 
Phys. Rev. X {\bf 10}(2020), 041033.

\bibitem{Zhang2023}
X. Zhang, F. Scarabel, K. Murty and J. Wu, {\it {Renewal equations for delayed population behaviour adaptation coupled with disease transmission dynamics: A mechanism for multiple waves of emerging infections}}, Math. Biosci. {\bf 365}(2023).

\end{thebibliography}
\end{document}